\newenvironment{claimproof}{\par\noindent\underline{Proof:}}{\leavevmode\unskip\penalty9999 \hbox{}\nobreak\hfill\quad\hbox{$\blacksquare$}}
\definecolor{MyGreen}{rgb}{0, 0.7, 0}
\definecolor{MyRed}{rgb}{0.8, 0, 0}
\newcommand{\maj}{\textsc{maj-hai}}
\newcommand{\cmaj}{\textsc{maj-hac}}
\newcommand{\tmaj}{\textsc{maj-hati}}
\newcommand{\ctmaj}{\textsc{maj-hatc}}
\newcommand{\khati}{$c$-\textsc{hati}}
\newcommand{\khai}{$c$-\textsc{hai}}
\newcommand{\khat}{$c$-\textsc{hatc}}
\newcommand{\maxmajsmi}{\textsc{max-maj-smi}}
\newcommand{\ksm}{$c$-\textsc{smc}}
\newcommand{\ksr}{$c$-\textsc{src}}
\newcommand{\ksri}{$c$-\textsc{sri}}
\newcommand{\ksmi}{$c$-\textsc{smi}}
\newcommand{\kha}{$c$-\textsc{hac}}
\newcommand{\pbDef}[3]{%
\noindent
\begin{center}
\begin{boxedminipage}{0.98\columnwidth}
#1\\[5pt]
\begin{tabular}{p{0.14\columnwidth}p{0.8\columnwidth}}
Input: & #2\\
Question: & #3
\end{tabular}
\end{boxedminipage}
\end{center}
}
\newtheorem{observation}[theorem]{Observation}
\newtheorem{mycorollary}[theorem]{Corollary}
\newcommand{\midd}{\mathbin{:}}
\title{Computational complexity of $k$-stable matchings}
 \author{Haris Aziz\inst{1} \and Gergely Cs\'{a}ji\inst{2,3} \and \'{A}gnes Cseh\inst{3,4}}
 \authorrunning{H. Aziz et al.}
 \institute{UNSW Sydney, Australia 
 \and
 E\"{o}tv\"{o}s Lor\'and University, Budapest, Hungary
 \and 
 Institute of Economics, 
  Centre for Economic and Regional Studies, Budapest, Hungary
 \and
 University of Bayreuth, Germany
}
\begin{document}
	\maketitle
\begin{abstract}
We study deviations by a group of agents in the three main types of matching markets: the house allocation, the marriage, and the roommates models. For a given instance, we call a matching \emph{$k$-stable} if no other matching exists that is more beneficial to at least $k$ out of the $n$ agents. The concept generalizes the recently studied majority stability \citep{Tha21}. We prove that whereas the verification of $k$-stability for a given matching is polynomial-time solvable in all three models, the complexity of deciding whether a $k$-stable matching exists depends on $\frac{k}{n}$ and is characteristic to each model.
\keywords{Majority stability \and stable matching \and popular matching \and complexity}
\end{abstract}

\section{Introduction}

In matchings under preferences, agents seek to be matched among themselves or to objects. Each agent has a preference list on their possible partners. When an agent is asked to vote between two offered matchings, they vote for the one that allocates the more desirable partner to them. The goal of the mechanism designer is to compute a matching that guarantees some type of optimality. A rich literature has emerged from various combinations of input type and optimality condition. In our paper, we study three classic input types together with a new, flexible optimality condition that incorporates already defined notions as well.

\paragraph{Input types.} Our three input types differ in the structure of the underlying graph and the existence of objects as follows.
\begin{itemize}
\item {\em House allocation model.} One side of a two-sided matching instance consists of agents who have strictly ordered, but possibly incomplete preferences and cast votes, while the other side is formed by objects with no preferences or votes.
\item {\em Marriage model.} Vertices on both sides of a two-sided matching instance are agents, who all have strictly ordered, but possibly incomplete preferences and cast votes. \item {\em Roommates model.} The matching instance is not necessarily two-sided, all vertices are agents, who have strictly ordered, but possibly incomplete preferences and cast votes.
\end{itemize}

\paragraph{Optimality condition.} For a given $k$, we say that a matching $M$ is \emph{$k$-stable} if there is no other matching $M'$ that at least $k$ agents prefer to~$M$. Notice that this notion is highly restrictive, as the number of agents who prefer $M$ to $M'$ is not taken into account. Some special cases of $k$ express very intuitive notions. The well-known notion of weak Pareto optimality is equivalent to $n$-stability; majority stability \citep{Tha21} is equivalent to $\frac{n+1}{2}$-stability, and finally, 1-stability asks whether there is a matching that assigns each agent their most preferred partner.

\paragraph{Structure of the paper and techniques.} We summarize relevant known results in Section~\ref{se:related} and lay the formal foundations of our investigation in Section~\ref{se:preliminaries}. We then turn to our complexity results for the house allocation model in Section~\ref{se:ha} and provide analogous proofs for the marriage and roommates models in Section~\ref{se:mr}. We conclude in Section~\ref{se:con}. Our proofs rely on tools from matching theory such as the famous Gallai-Edmonds decomposition, stable partitions, or scaling an instance with carefully designed gadgets.
 
\section{Related work}
\label{se:related}

Matchings under preferences have been actively researched by both Computer Scientists and Economists \citep{Rot82,Man13}. In this section, we highlight known results on the most closely related optimality concepts from the field.

\subsection{Pareto optimal matchings}

Pareto optimality is a desirable condition, most typically studied in the house allocation model. It is often combined with other criteria, such as lower and upper quotas. A matching $M$ is \emph{Pareto optimal} if there is no matching $M'$, in which no agent is matched to a object they consider worse, while at least one agent is matched to a object they consider better than their object in~$M$. A much less restrictive requirement implies \emph{weak Pareto optimality}: $M$ is weakly Pareto optimal if no matching $M'$ exists that is preferred by all agents. This notion is equivalent to $n$-stability.

Weak Pareto optimality is mainly used in continuous and multi-objective optimization \citep{EN02} and in economic theory~\citep{War83,FGJ06}. Pareto optimality is one of the most studied concepts in coalition formation and hedonic games \citep{ABH13,B20,EFF20,BFO22}, and has also been defined in the context of various matching markets \citep{CEFM+14,CEFM+16,ACGS18,BG20}. 
As shown by \citet{ACM+04}, in the house allocation model, a maximum size Pareto optimal matching can be found in polynomial time.

\subsection{Stable matchings}

Possibly the most studied optimality notion for the marriage and roommates models is stability. A matching is \emph{stable} if it is not \emph{blocked} by any edge, that is, no pair of agents exists who are mutually inclined to abandon their partners for each other. There is a striking difference between the definitions of stability and $k$-stability: while the existence of a blocking edge is an inherently local property, $k$-stability is defined in relation with other matchings. 

The existence of stable matchings was shown in the seminal paper of \citet{GS62} for the marriage model. Later, \citet{Irv85} gave a polynomial-time algorithm to decide whether a given roommates instance admits a stable matching. \cite{tan1991necessary} improved Irving's algorithm by providing an algorithm that always finds a so-called stable partition, which coincides with a stable matching if any exists. Stability was later extended to various other input settings in order to suit the growing number of applications such as  employer matching markets \citep{RS90}, university admission decisions \citep{BS99,BDK10}, campus housing matchings \citep{CS02,PPR08}, and bandwidth matching \citep{GLM+07}.

\subsection{Popular matchings}

Popular matchings translate the simple majority voting rule into the world of matchings under preferences. Given two matchings~$M$ and $M'$, matching $M$ is more popular than $M'$ if the number of vertices preferring $M$ to $M'$ is larger than the number of vertices preferring $M'$ to~$M$. A matching~$M$ is \emph{popular} in an instance if there is no matching~$M'$ that is more popular than~$M$. The main difference between $k$-stability and popularity is that in the earlier, only agents are counted who benefit from switching to an alternative matching, while in popularity, agents can vote both for and against the alternative matching.

The concept of popularity was first introduced by \citet{Gar75} for the marriage model, and then studied by \citet{AIK+07} in the house allocation model. Polynomial-time algorithms to find a popular matching were given in both models. In the marriage model, it was already noticed by \citeauthor{Gar75} that all stable matchings are popular, which implies that in this model, popular matchings always exist. In fact stable matchings are the smallest size popular matchings, as shown by \citet{BIM10}, while maximum size popular matchings can be found in polynomial time as well \citep{HK13,Kav14}. Only recently \citet{FKP+19} and \citet{GMS+21} resolved the long-standing \citep{BIM10,HK13a,Man13,Cse17,HK21} open problem on the complexity of deciding whether a popular matching exists in a popular roommates instance and showed that the problem is $\NP$-complete. This hardness extends to graphs with complete preference lists \citep{CK21}.

Besides the three matching models, popularity has also been defined for spanning trees \citep{Dar13}, permutations \citep{VSW14,KCM21}, the ordinal group activity selection problem \citep{Dar18}, and very recently, for branchings \citep{KKM+22a}. Matchings nevertheless constitute the most actively researched area of the majority voting rule outside of the usual voting scenarios.

\subsection{Relaxing popularity}
The two most commonly used notions for near-popularity are called \emph{minimum unpopularity factor} \citep{McC08,KMN11,HK13,Kav14,BHH+15,KKM+22a,RI21} and \emph{minimum unpopularity margin} \citep{McC08,HKMN11,KKM+22a,KKM+22b}. Both notions express that a near-popular matching is never beaten by too many votes in a pairwise comparison with another matching. We say that matching $M'$ \emph{dominates} matching $M$ by a margin of $u-v$, where $u$ is the number of agents who prefer $M'$ to $M$, while $v$ is the number of agents who prefer $M$ to~$M'$. The \emph{unpopularity margin} of $M$ is the maximum margin by which it is dominated by any other matching. As opposed to $k$-stability, the  unpopularity margin takes the number of both the satisfied and dissatisfied agents into account when comparing two matchings.

Checking whether a matching $M'$ exists that dominates a given matching $M$ by a margin of $k$ can be done in polynomial time by the standard popularity verification algorithms in all models \citep{AIK+07,BIM10,RI21}. Finding a least-unpopularity-margin matching in the house allocation model is NP-hard \citep{McC08}, which implies that for a given (general) $k$, deciding whether a matching with unpopularity margin $k$ exists is also NP-complete. A matching of unpopularity factor~0, which is a popular matching, always exists in the marriage model, whereas deciding whether such a matching exists in the roommates model is NP-complete \citep{FKP+19,GMS+21}. 

The unpopularity margin of a matching expresses the degree of undefeatability of a matching admittedly better than our $k$-stability. We see a different potential in $k$-stability and majority stability. The fact that, compared to $M$, there is no alternative matching in which at least $k$ agents improve simultaneously, is a strong reason for choosing $M$---especially if $k = \frac{n+1}{2}$. 
The decision maker might care about minimizing the number of agents who would mutually improve by switching to an alternative matching. If there is a matching, where a significant number of agents can improve simultaneously, then they may protest together against the central agency to change the outcome---even though it would make some other agents worse off---out of ignorance or lack of information about the preferences of others. The unpopularity margin and factor give no information on this aspect, as they only measure the relative number of improving and disimproving agents.

\subsection{Majority stability}
The study of majority stable matchings was initiated very recently by \citet{Tha21}. The three well-known voting rules plurality, majority, and unanimity translate into popularity, majority stability, and Pareto optimality in the matching world. A matching $M$ is called \emph{majority stable} if no matching $M'$ exists that is preferred by more than half of all voters to~$M$. The concept is equivalent to $\frac{n+1}{2}$-stability in our terminology.

\citeauthor{Tha21} observed that majority stability, in sharp contrast to popularity, is strikingly robust to correlated preferences. Based on this, he argued that in application areas where preferences are interdependent, majority stability is a more desirable solution concept than popularity. He provided examples and simulations to illustrate that, unlike majority stable matchings, the existence of a popular matching is sensitive to even small levels of correlations across individual's preferences. Via a linear programming approach he also showed that the verification of majority stability is polynomial-time solvable in the house allocation model.

\section{Preliminaries}
\label{se:preliminaries}

In this section, we describe our input settings, formally define our optimality concepts, and give a structured overview of all investigated problems. 

\subsection{Input}
In the simplest of our three models, the house allocation model, we consider a set of agents $N=\{1,\ldots,n\}$ and a set of objects~$O$. Each agent $i\in N$ has strict preferences $\succ_i$ over a subset of $O$, called the set of \emph{acceptable} objects to $i$, while objects do not have preferences. The notation $o_1 \succ_i o_2$ means that agent $i$ prefers object $o_1$ to object~$o_2$. Being unmatched is considered worse by agents than being matched to any acceptable object or agent. To get a more complete picture, we also explore cases, where ties are allowed in the preference lists. A \emph{matching} assigns each object to at most one agent and gives at most one acceptable object to each agent. 

In the marriage model, no objects are present. Instead, the agent set $N = U \cup W$ is partitioned into two disjoint sets, and each agent seeks to be matched to an acceptable agent from the other set. In the roommates model, an agent from the agent set $N$ can be matched to any acceptable agent in the same set.

For clearer phrasing, we often work in a purely graph theoretical context. The \emph{acceptability graph} of an instance consists of the agents and objects as vertices and the acceptability relations as edges between them. This graph is bipartite in the house allocation and marriage models.

For a matching $M$, we denote by $M(i)$ the object or agent assigned to agent $i\in N$. Each agent's preferences over objects or agents can be extended naturally to corresponding preferences over matchings. According to these extended preferences, an agent is indifferent among all matchings in which they are assigned to the same object or agent. Furthermore, agent $i$ prefers matching $M'$ to matching $M$ if $M'(i) \succ_i M(i)$. 



\subsection{Optimality}
Next, we define some standard optimality concepts from the literature. A matching $M$ is 

\begin{itemize}
	\item \emph{weakly Pareto optimal} if there exists no other matching $M'$ such that $M'(i) \succ_i M(i)$ for all $i\in N$;
	\item 
 \emph{majority stable} if there exists no other matching $M'$ such that
		$|i\in N\midd M'(i)\succ_iM(i)|\geq \frac{n+1}{2}$;
	\item 
 \emph{popular} if there exists no other matching $M'$ such that
	$|i\in N\midd M'(i)\succ_iM(i)|>|i\in N\midd M(i)\succ_iM'(i)|$.
\end{itemize} 

In words, weak Pareto optimality means that, compared to $M$, no matching is better for all agents, majority stability means that no matching is better for a majority of all agents, while popularity means that no matching is better for a majority of the agents who are not indifferent between the two matchings. It is easy to see that popularity implies majority stability, which in turn implies weak Pareto optimality. 

We refine this scale of optimality notions by adding $k$-stability to it. A matching $M$ is 
\begin{itemize}
	\item \emph{$k$-stable} if there exists no other matching $M'$ such that
		$|i\in N\midd M'(i)\succ_iM(i)|\geq k.$
\end{itemize}
In words, $k$-stability means that no matching $M'$ is better for at least $k$ agents than $M$---regardless of how many agents prefer $M$ to~$M'$. Weak Pareto optimality is equivalent to $n$-stability, while majority stability is equivalent to $\frac{n+1}{2}$-stability. It follows from the definition that $k$-stability implies $\left(k+1\right)$-stability.

We demonstrate $k$-stability on an example instance, which we will also use in our proofs later.

\begin{example}[An $(n-1)$-stable matching may not exist]
\label{ex:cond}
Consider an instance in which $N=\{1,2, \ldots, n\}$ and $O=\{o_1,o_2, \ldots, o_n\}$. Each agent has identical preferences of the form $o_1\succ o_2 \succ \ldots \succ o_n$, analogously to the preferences in the famous example of \citet{Con85}. For an arbitrary matching $M$, each agent $i$ of the at least $n-1$ agents, for whom $M(i) \neq o_1$ holds, could  improve by switching to the matching that gives them the object directly above $M(i)$ in the preference list (or any object if $i$ was unmatched in~$M$). Therefore, no matching is majority stable or $(n-1)$-stable.
\end{example}

\subsection{Our problems and contribution}

Now we define our central decision problems formally. We are particularly interested in the computational complexity and existence of $k$-stable matchings depending on the value $c=\frac{k}{n}$, so in our problems we investigate $cn$-stability for all constants $c\in (0,1)$. One problem setting is below.

\pbDef{\khai}{
Agent set $N$, object set $O$, a strict ranking $\succ_i$ over the acceptable objects for each $i \in N$ and a constant $c\in (0,1)$.
}{
Does a $cn$-stable matching exist?}

In most hardness results, we will choose the constant $c$ not to be part of the input, but to be a universal constant instead.
Our further problem names also follow the conventions \citep{Man13}. For majority stability instead of $cn$-stability, we add the prefix \textsc{maj}. We substitute \textsc{ha} by \textsc{sm} for the marriage model, and by \textsc{sr} for the roommates model. If the preference lists are complete, that is, if all agents find all objects acceptable, then we replace the \textsc{i} standing for incomplete by a \textsc{c} standing for complete. If ties are allowed in the preference lists, we add a \textsc{t}. Table~\ref{ta:names} depicts a concise overview of the problem names.

\begin{table}[htb]
      \centering
      \setlength{\tabcolsep}{10pt}
\begin{tabular}{c|c|c|c}
           Optimality criterion & Model  & Presence of ties & Completeness of preferences \\
           \hline\
           $c$ or \textsc{maj} & \textsc{ha} or \textsc{sm} or \textsc{sr} & \textsc{t} or $\emptyset$ & \textsc{c} or \textsc{i} 
        \end{tabular}
\caption{Each problem name consists of four components, as shown in the columns of the table.}
\label{ta:names}
\end{table}

For each of the two optimality criteria, there are $3\cdot 2 \cdot 2 = 12$ problem variants. 
Our goal was to solve all 12 variants for all $0\le c\le 1$ values, and also majority stability (which corresponds to $cn$-stability for $c=\frac{1}{2}+\varepsilon$, if $\varepsilon <\frac{1}{n}$). Our results are summarized in Table~\ref{ta:results}. We remark that our positive results are also existencial results, that is we show that a majority stable matching always exists in the \textsc{sm} model and a $\left(\frac{5}{6}n+\varepsilon\right)$-stable matching always exists in the \textsc{sr} model for any $\varepsilon >0$.
We only leave open the complexity of the four variants of $c$-\textsc{sr} for $\frac{2}{3}< c\le \frac{5}{6}$. 

\setlength\extrarowheight{3pt}
\begin{table}[ht]
\centering
    \resizebox{\textwidth}{!}{
\begin{tabular}{|c||c|c|c|c|c|}
   \noalign{\hrule}
Problem &\textsc{ha} & \multicolumn{2}{|c}{\textsc{sm}} & \multicolumn{2}{|c|}{\textsc{sr}} \\
\cline{2-6}
variant & NP-complete & NP-complete for $c \le \frac{1}{2}$ & P for $c > \frac{1}{2}$ & NP-complete for $c \le  \frac{2}{3}$ & P for $c > \frac{5}{6}$ \\
\noalign{\hrule}
\textsc{c}, \textsc{i}, \textsc{tc}, \textsc{ti}  & Theorems~\ref{th:khai_allk}, \ref{th:kha}        & Theorem~\ref{th:npsmsr}   & Theorem~\ref{th:smpol}     &  Theorem~\ref{th:npsmsr}  & Theorem~\ref{th:ksrpol} \\
   \noalign{\hrule}
\end{tabular}
}
\caption{Our results on the complexity of deciding whether a $k$-stable matching exists.}\label{ta:results}
\end{table}

In order to draw a more accurate picture in the presence of ties, we also investigate two standard input restrictions \citep{BM04,Pet16,CHK17}, see Table~\ref{ta:results2}.
\begin{itemize}
    \item  \textsc{dc}: dichotomous and complete preferences, which means that agents classify all objects or other agents as  either ``good'' or ``bad'', and can be matched to either one of these.
    \item \textsc{sti}: Possibly incomplete preferences consisting of a single tie, which again means that agents classify all objects or other agents as  either ``good'' or ``bad'', and consider a bad match to be unacceptable.
\end{itemize}
Note that these two cases can be different, because if the preferences are \textsc{dc}, then each agent can be in one of three situations: matched to a good partner, matched to a bad partner or remain unmatched. In \textsc{sti}, each agent is either matched or unmatched.

\begin{table}[htb]
\centering
    \resizebox{\textwidth}{!}{
\begin{tabular}{|c||c|c|c|c|c|c|}
   \noalign{\hrule}
Problem &\multicolumn{3}{c}{\textsc{ha}} & \multicolumn{3}{|c|}{\textsc{sm} and \textsc{sr}}  \\
\cline{2-7}
variant &  $c\le \frac{1}{2}$ &  $\frac{1}{2}<c\le \frac{2}{3}$  &  $c>\frac{2}{3}$  &    $c\le \frac{1}{3}$&  $\frac{1}{3}<c \le \frac{1}{2}$ & $c > \frac{1}{2}$   \\
\noalign{\hrule}
 \textsc{dc}  &    NP-c: Theorem~\ref{th:khat}   & NP-c: Theorem~\ref{th:khat}  & \multirow{2}{*}{P: Theorem~\ref{thm:alltie+dichHA}}  &  NP-c: Theorem~\ref{thm:dichcompsrneg}  & NP-c: Theorem~\ref{thm:dichcompsrneg} & P: Theorem~\ref{thm:dichcompsrpos}   \\
\textsc{sti}  &   NP-c: Theorem~\ref{th:ksmallties}   & P: Theorem~\ref{thm:alltie+dichHA}  & &   NP-c: Theorem~\ref{thm:alltiesrneg}  & P: Theorem~\ref{thm:alltiesrpos}  &   P: Theorem~\ref{thm:alltiesrpos} \\
\hline
\end{tabular}
}
\caption{Results for the two restricted settings in the presence of ties. NP-c abbreviates NP-complete.}
\label{ta:results2}
\end{table}

\section{The house allocation model}
\label{se:ha}
In this section, we examine the computational aspects of $k$-stability and majority stability in the house allocation model. We first present positive results  on verification in Section~\ref{se:haver} and then turn to hardness proofs on existence and some solvable restricted cases in Section~\ref{se:haex}.

\subsection{Verification}
\label{se:haver}
\citet{Tha21} constructed an integer linear program to check whether a given matching is majority stable. He showed that the underlying matrix of the integer linear constraints is unimodular and hence the problem can be solved in polynomial time. Here we provide a simple characterization of majority stable matchings, which also delivers a fast and simple algorithm for testing majority stability.

In this subsection, we investigate $k$-stability for an arbitrary, but fixed $k \in \mathbb{N}$, which does not depend on~$n$. Our first observation characterizes $1$-stable matchings.
							
\begin{observation}
    A matching is $1$-stable if and only if each agent gets their most preferred object.
\end{observation}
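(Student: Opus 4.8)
The plan is to unfold the definition of $1$-stability directly and prove both implications separately, each of which is short. By definition, $M$ is $1$-stable exactly when no other matching $M'$ satisfies $|\{i\in N : M'(i)\succ_i M(i)\}|\geq 1$; equivalently, for every matching $M'$ and every agent $i$ we have $M(i)\succsim_i M'(i)$. I would build the whole argument around this reformulation, since it makes the statement almost immediate.

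For the ``if'' direction, suppose every agent $i$ receives their most preferred (top-ranked) object in $M$. Then for any agent $i$ and any matching $M'$, the object $M'(i)$ can never be strictly better for $i$ than $M(i)$, as nothing beats $i$'s top choice. Hence no agent strictly prefers any $M'$ to $M$, and so $M$ is $1$-stable.

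For the ``only if'' direction, I would argue by contraposition. Suppose some agent $i$ does not receive their most preferred object, and let $o^\ast$ denote $i$'s top-ranked acceptable object, so that $o^\ast\succ_i M(i)$. Consider the single-edge matching $M'=\{(i,o^\ast)\}$ that assigns $o^\ast$ to $i$ and leaves all other agents unmatched. This is a valid matching, since no object is assigned more than once, and $M'(i)=o^\ast\succ_i M(i)$, so at least one agent---namely $i$---strictly prefers $M'$ to $M$. Thus $M$ is not $1$-stable, completing the contrapositive.

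I do not expect any genuine obstacle here; the one point worth flagging is that $1$-stability requires only a \emph{single} improving agent, which is precisely what lets us take the trivial one-edge witness $M'$ without worrying about how the change affects the other agents (who may well be harmed by being left unmatched). The only degenerate case is an agent with no acceptable object, which is necessarily unmatched and can never improve; the statement should then be read as ``each agent gets the best partner available to them,'' and the argument above goes through unchanged.
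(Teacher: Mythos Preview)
Your proof is correct; the paper itself does not prove this observation at all, treating it as self-evident and proceeding immediately to the generalization in Observation~\ref{lemma:ks-charac}. Your explicit argument---particularly the one-edge witness $M'=\{(i,o^\ast)\}$ for the contrapositive direction---is exactly the kind of detail the paper leaves to the reader.
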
						

To generalize this straightforward observation to $k$-stability and majority stability, we introduce the natural concept of an improvement graph. For a given matching $M$, let $G_M=(N\cup O,E)$ be the corresponding \textit{improvement graph}, where $(i,o)\in E$ if and only if $o\succ_i M(i)$. In words, the improvement graph consists of edges that agents prefer to their current matching edge. We also say that agent $i$ \emph{envies} object $o$ if $(i,o)\in E$.

\begin{observation}\label{lemma:ks-charac}
	Matching $M$ is $k$-stable if and only if $G_M$ does not admit a matching of size at least~$k$. In particular, $M$ is majority stable if and only if $G_M$ does not admit a matching of size at least~$\frac{n+1}{2}$.
\end{observation}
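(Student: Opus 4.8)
The plan is to prove both directions by directly connecting the definition of $k$-stability to the existence of a large matching in the improvement graph $G_M$. Recall that $M$ is $k$-stable if and only if there is no matching $M'$ with $|\{i \in N : M'(i) \succ_i M(i)\}| \geq k$, and that $(i,o) \in E(G_M)$ exactly when $o \succ_i M(i)$. The key observation is that the set of agents who strictly improve when moving from $M$ to some alternative matching $M'$ corresponds precisely to a set of edges in $G_M$, one per improving agent, and these edges form a matching because each improving agent $i$ is assigned a distinct object $M'(i)$ (distinct since $M'$ is a matching).

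For the forward direction, I would argue the contrapositive. Suppose $G_M$ admits a matching $F$ of size at least $k$. Each edge $(i,o) \in F$ means agent $i$ envies object $o$, i.e.\ $o \succ_i M(i)$. I would then construct an alternative matching $M'$ in the original instance by assigning to each agent $i$ saturated by $F$ the object $F(i)$, and leaving every other agent matched as in $M$ (or unmatched). The main point to verify is that $M'$ is a genuine valid matching: since $F$ is a matching in $G_M$ the objects handed out along $F$ are pairwise distinct and acceptable, so no object is double-assigned among the improving agents; a small technical nuisance is that an object used by $F$ might already be assigned to some other agent under $M$, so I would simply drop that agent's $M$-edge when it conflicts, which only removes agents from being matched and cannot decrease the improving set. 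Every agent incident to $F$ strictly prefers $M'(i) = F(i)$ to $M(i)$, so at least $|F| \geq k$ agents improve, witnessing that $M$ is not $k$-stable.

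For the reverse direction, suppose $M$ is not $k$-stable, so there exists $M'$ with the improving set $S = \{i \in N : M'(i) \succ_i M(i)\}$ of size at least $k$. For each $i \in S$, the edge $(i, M'(i))$ lies in $G_M$ by definition of the improvement graph. Since $M'$ is a matching, the objects $\{M'(i) : i \in S\}$ are pairwise distinct, so $\{(i, M'(i)) : i \in S\}$ is a matching in $G_M$ of size $|S| \geq k$. This gives the desired large matching in $G_M$.

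The statement about majority stability then follows by instantiating $k = \frac{n+1}{2}$, since majority stability is by definition $\frac{n+1}{2}$-stability. The main obstacle, and the only step requiring genuine care, is the construction of $M'$ in the forward direction: one must ensure the merged assignment respects both that objects receive at most one agent and that agents receive at most one object, which is exactly where the fact that $F$ is a \emph{matching} (rather than an arbitrary edge set) is used. Everything else is an unwinding of definitions, so I expect the proof to be short.
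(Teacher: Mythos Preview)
Your proposal is correct and follows essentially the same argument as the paper, which simply observes that a size-$k$ matching in $G_M$ is itself a valid matching in the house allocation instance in which all $k$ matched agents strictly improve over~$M$. Your merging of $F$ with $M$ in the forward direction (and the attendant conflict resolution) is therefore unnecessary---$F$ alone already witnesses the failure of $k$-stability---though it is harmless.
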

\begin{proof}
 It follows from the definition of $G_M$ that $G_M$ admits a matching $M'$ of size at least~$k$ if and only if in $M'$, at least $k$ agents get a better object than in~$M$. The non-existence of such a matching $M'$ defines $k$-stability for~$M$. 
\end{proof}
		
    

Observation~\ref{lemma:ks-charac} delivers a polynomial verification method for checking $k$-stability and majority stability. Constructing $G_M$ to a given matching $M$ takes at most $O(m)$ time, where $m$ is the number of acceptable agent-object pairs in total. Finding a maximum size matching in $G_M$ takes $O(\sqrt{n}m)$ time \citep{HK73}.

\begin{mycorollary}
\label{cor:k-verif}
	For any $k \in \mathbb{N}$, it can be checked in $O(\sqrt{n}m)$ time whether a given matching is $k$-stable. In particular, verifying majority stability can be done in $O(\sqrt{n}m)$ time.
\end{mycorollary}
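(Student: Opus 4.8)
The plan is to read off the bound directly from the matching characterization in Observation~\ref{lemma:ks-charac}, turning the verification question into a single maximum-matching computation in the improvement graph. By that observation, a matching $M$ is $k$-stable precisely when the improvement graph $G_M$ admits no matching of size at least $k$, equivalently when the maximum matching of $G_M$ has size at most $k-1$. So the whole algorithm is: build $G_M$, compute the size $\nu(G_M)$ of a maximum matching in it, and answer ``yes'' if and only if $\nu(G_M) < k$.

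First I would account for the construction of $G_M$. For each agent $i \in N$ I scan the preference list of $i$ from the top and insert an envy edge $(i,o)$ for every object $o$ with $o \succ_i M(i)$, stopping once $M(i)$ is reached (and scanning the entire list if $i$ is unmatched). Each acceptable pair is examined at most once, so this takes $O(m)$ time, matching the bound already noted in the text preceding the corollary. The key structural point to record is that $G_M$ lives inside the acceptability graph, which is bipartite in the house allocation model with $N$ on one side and $O$ on the other; hence $G_M$ is bipartite and a maximum matching can be found by the Hopcroft--Karp algorithm in $O(\sqrt{n}\,m)$ time \citep{HK73}. Adding the two phases gives $O(m) + O(\sqrt{n}\,m) = O(\sqrt{n}\,m)$ overall, and the comparison $\nu(G_M) < k$ is a single constant-time test.

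Finally I would specialize to majority stability. Since Observation~\ref{lemma:ks-charac} identifies majority stability of $M$ with $G_M$ admitting no matching of size at least $\frac{n+1}{2}$, this is just the case $k = \frac{n+1}{2}$ of the general statement, so the same $O(\sqrt{n}\,m)$ running time applies verbatim. I do not expect a genuine obstacle here: the statement is an immediate consequence of Observation~\ref{lemma:ks-charac} together with a standard bipartite matching routine. The only two points that deserve an explicit word are that $G_M$ is bipartite (so that the $O(\sqrt{n}\,m)$ bipartite algorithm is applicable rather than a slower general-graph matching algorithm) and that the construction of $G_M$ does not dominate the asymptotic cost.
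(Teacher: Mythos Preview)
Your proposal is correct and mirrors the paper's own argument: the corollary is stated immediately after a sentence noting that $G_M$ can be built in $O(m)$ time and that a maximum matching in it can be found in $O(\sqrt{n}\,m)$ time via Hopcroft--Karp, which is exactly what you spell out. Your added remarks that $G_M$ is bipartite and that the construction cost is dominated are fine clarifications but not departures from the paper's route.
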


\subsection{Existence}
\label{se:haex}

By Corollary~\ref{cor:k-verif}, all decision problems on the existence of a $k$-stable matching in the house allocation model are in NP. Our hardness proofs rely on reductions from the problem named exact cover by 3-sets (\textsc{x3c}), which was shown to be NP-complete by \citet{GJ79}. First we present our results for the problem variants with possibly incomplete preference lists, then extend these to complete lists, and finally we discuss the case of ties in the preferences.

\pbDef{\textsc{x3c}}{
A set $\mathcal{X}=\{ 1,\dots,3\hat{n}\}$ and a family of 3-sets $\mathcal{S}\subset \mathcal{P}(\mathcal{X})$ of cardinality $3\hat{n}$ such that each element in $\mathcal{X}$ is contained in exactly three sets.}{
Are there $\hat{n}$ 3-sets that form an exact 3-cover of $\mathcal{X}$, that is, each element in $\mathcal{X}$ appears in exactly one of the $\hat{n}$ 3-sets?}

\subsubsection{Incomplete preferences}
\begin{theorem}
\label{th:khai}
\khai\ is NP-complete even if each agent finds at most two objects acceptable.
\end{theorem}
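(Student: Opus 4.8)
Membership in NP is immediate from Corollary~\ref{cor:k-verif}: given a candidate matching we verify $cn$-stability by checking that the improvement graph $G_M$ has no matching of size $\lceil cn\rceil$. For hardness I would reduce from \textsc{x3c}, using Observation~\ref{lemma:ks-charac} as the sole interface to $k$-stability: a matching $M$ is $cn$-stable exactly when the maximum matching of $G_M$ has size below $cn$. The whole construction is therefore aimed at a single quantitative gap---produce an instance, together with a threshold $cn$, so that over all matchings $M$ the minimum possible value of $\nu(G_M)$ (the size of a maximum matching in the improvement graph) is strictly smaller when the \textsc{x3c} instance is a yes-instance than when it is a no-instance, and place $cn$ between the two values.

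The gadgets must respect the degree bound, so every agent is, in the language of the acceptability graph, an edge joining its (at most two) acceptable objects, with one endpoint marked preferred. For each set $S_j=\{e,f,g\}$ I would introduce three set-agents, one per element, whose preferred object is the shared element object $o_e$, $o_f$, $o_g$, and whose second choice sits inside a small set-local cycle gadget. The cycle is designed to admit only two envy-free internal configurations, an \emph{in} state in which all three set-agents grab their element objects and an \emph{out} state in which they retreat into the gadget; any partial configuration leaves an internal agent able to improve, forcing an all-or-nothing selection. Each element object $o_e$ is shared by exactly the three sets containing $e$, so it can be grabbed by at most one \emph{in}-set; this conflict encodes disjointness. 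Intuitively a family of \emph{in}-sets is a packing, an exact cover is the packing that fills every element slot, and the design charges a distinct improvable agent to every slot left empty (under-coverage) and to every set forced out because its slot was stolen (over-coverage). Example~\ref{ex:cond} is the conceptual template for the cycle gadget: a shifted cycle is precisely a configuration in which everyone can rotate upward.

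For the forward direction I would take an exact cover, put the corresponding $\hat n$ sets in the \emph{in} state and all others in the \emph{out} state, and verify directly that $\nu(G_M)$ attains the intended low value. The reverse direction is the crux and where I expect the real work: I must show that whenever the instance has \emph{no} exact cover, \emph{every} matching $M$ satisfies $\nu(G_M)\ge cn$. The danger is ``cheating'' matchings that ignore the intended semantics---leaving agents unmatched, using the cycle gadgets in non-canonical ways, or mixing \emph{in} and \emph{out} states---so the bound cannot be read off the construction and instead needs a global counting argument. I would establish it by a deficiency/Hall-type estimate on $G_M$, or equivalently via the Gallai--Edmonds decomposition, arguing that the number of element slots not covered exactly once is bounded below by a constant multiple of the ``missing'' part of any cover, and that each such defect contributes a distinct augmenting possibility in $G_M$; summing these yields a matching of the required size. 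Finally I would fix $c$ so that $cn$ falls strictly between the yes- and no-values of $\min_M \nu(G_M)$; since $c$ is part of the input to \khai\ this is unconstrained, and the degree-$2$ restriction is exactly what keeps $G_M$ bounded-degree on the agent side, making the counting argument manageable.
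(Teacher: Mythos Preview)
Your high-level plan---reduce from \textsc{x3c} and use Observation~\ref{lemma:ks-charac} to turn $cn$-stability into an upper bound on $\nu(G_M)$---matches the paper exactly. But the paper's construction is considerably simpler than your cycle-gadget proposal, and in particular it sidesteps the very backward-direction difficulty you flag as ``the crux.''

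The paper uses no cycle gadget. For each set $S_j=\{j_1,j_2,j_3\}$ it creates one \emph{set-object} $p_j$ and four agents: the three set-agents $s_j^\ell$ with preference $o_{j_\ell}\succ p_j$, and one extra agent $t_j$ whose only acceptable object is~$p_j$. For each element $i$ it creates the object $o_i$ and two dummy agents $d_i^1,d_i^2$ who only want~$o_i$. That is the entire instance: $18\hat n$ agents, $6\hat n$ objects, every list of length at most two, and the target is $(5\hat n+1)$-stability.

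The key simplification is that the two dummies guarantee every element object $o_i$ is envied in \emph{any} matching; this replaces your ``charge a distinct improvable agent to every empty slot'' mechanism with a flat baseline of $3\hat n$ envied objects. The only question left is how many of the $3\hat n$ set-objects $p_j$ are envied. Now $p_j$ is \emph{not} envied iff $t_j$ holds $p_j$ and every $s_j^\ell$ holds its first choice $o_{j_\ell}$; since the $o_i$ are shared, the unenvied $p_j$'s index a packing, and having at least $\hat n$ of them forces an exact cover. Crucially, every envied object can be handed to an envious agent \emph{simultaneously} (each agent touches at most one $p_j$, and each $o_i$ always has a free dummy), so $\nu(G_M)$ simply equals the number of envied objects. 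The backward direction is therefore a two-line count, not a Hall-type or Gallai--Edmonds argument.

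Your cycle-gadget route is not obviously wrong, but it is under-specified precisely where it matters: you never construct the gadget, and you explicitly leave the lower bound on $\nu(G_M)$ in the no-cover case as anticipated future work. The paper's insight---force a constant baseline via dummy agents and let a single object $p_j$ act as the indicator of whether set $S_j$ is fully ``in''---removes that work entirely.
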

\begin{proof}
Let $I$ be an instance of \textsc{x3c}, where $\mathcal{S}=\{ S_1,\dots,S_{3\hat{n}}\}$ is the family of 3-sets and $S_j=\{ j_1,j_2,j_3\}$. We build an instance $I'$ of \khai\ as follows. For each set $S_j\in \mathcal{S}$ we create four agents $s_j^1,s_j^2,s_j^3, t_j$ and an object~$p_j$. For each element $i\in \mathcal{X}$ we create an object $o_i$ and two dummy agents $d_i^1,d_i^2$. Altogether we have $12\hat{n}+6\hat{n}=18\hat{n}$ agents and $3\hat{n}+3\hat{n}=6\hat{n}$ objects. The preferences are described and illustrated in Figure~\ref{fig:basecons}.

\begin{figure}
\begin{minipage}[l]{0.4\textwidth}
\begin{tabular}{rll}
    $s_j^{\ell}:$  &  $o_{j_{\ell}}\succ p_j$ & for $j\in [3\hat{n}]$, $\ell \in [3]$ \\
    $t_j:$ & $p_j$ & for $j\in [3\hat{n}]$\\
    $d_i^1,d_i^2:$ & $o_i$ & for $i\in [3\hat{n}]$
\end{tabular}
\end{minipage}\hspace{3mm}
\begin{minipage}[r]{0.56\textwidth}
    \centering
    \includegraphics[height=0.26\textheight]{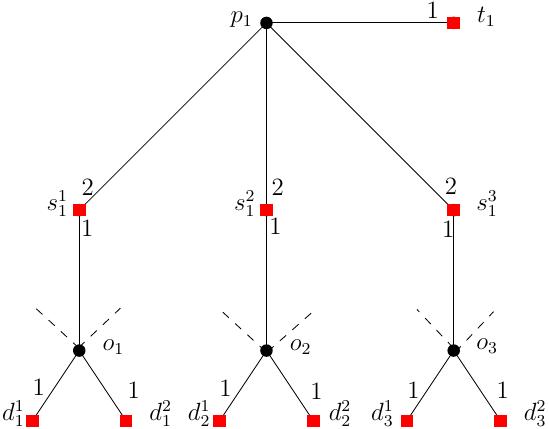}
    \end{minipage}
    \caption{The preferences and a  graph illustration of a gadget of a set $S_1=\{ 1,2,3\}$ in the proof of Theorem~\ref{th:khai}. Red squares are agents, black disks are objects, and the numbers on the edges indicate the preferences to the left of the graph. Dashed edges run to other gadgets. Each agent finds at most two objects acceptable.}
    \label{fig:basecons}
\end{figure}

\noindent We prove that there is a $(5\hat{n}+1)$-stable matching $M$ in $I'$ if and only if there is an exact 3-cover in~$I$.

\begin{claim}
If $I$ admits an exact 3-cover, then $I'$ admits a $(5\hat{n}+1)$-stable matching.
\end{claim}
\begin{claimproof}
Suppose that $S_{l^1},\dots,S_{l^{\hat{n}}}$ form an exact 3-cover. Construct a matching $M$ as follows: for each $j\in [3\hat{n}]$ match $t_j$ with $p_j$. For each $j\in \{ l^1,\dots,l^{\hat{n}}\}$ match $s_j^{\ell}$ with $o_{j_{\ell}}$ for $\ell \in [3]$. As each object is covered exactly once, $M$ is a matching.

We claim that $M$ is $(5\hat{n}+1)$-stable. Due to Observation~\ref{lemma:ks-charac} it is enough to show that at most $5\hat{n}$ objects are envied by any agent, so the improvement graph has at most $5\hat{n}$ objects with non-zero degree. For each $j\in \{ l^1,\dots,l^{\hat{n}}\}$, the object $p_j$ is not envied by anyone, as all of $s_j^1,s_j^2,s_j^3$ got their best object and $t_j$ got $p_j$. Hence, at most $2\hat{n}$ objects of type $p_j$ and at most $3\hat{n}$ objects of type $o_i$ are envied, proving our claim. 
\end{claimproof}

\begin{claim}
If $I'$ admits a $(5\hat{n}+1)$-stable matching, then $I$ admits an exact 3-cover.
\end{claim}
\begin{claimproof}
Let $M$ be a $(5\hat{n}+1)$-stable matching. First we prove that in $M$ at most $5\hat{n}$ objects are envied by any agent, because we can construct a matching $M'$ that gives all envied objects to an agent who envies them. Each object $o_i$ is envied in any matching by at least one of $d_i^1$ and $d_i^2$ and can be given to the envious agent in~$M'$. Regarding envied objects of type $p_j$, one agent only finds at most one $p_j$ object acceptable, which implies that each envied $p_j$ can be assigned to an envious agent in~$M'$. Therefore, $M'$ is indeed a matching.

As at most $5\hat{n}$ objects can be envied and all $o_i$ objects are envied in $M$, at most $2\hat{n}$ envied objects are of type~$p_j$. This implies that at least $\hat{n}$ objects of type~$p_j$ are not envied by any agent. As these objects are the first choices of their $s_j^{\ell}$, $\ell \in [3]$ agents, those agents must all get their first-choice object of type~$o_i$. As $M$ is a matching, these sets constitute an exact 3-cover.
\end{claimproof}
\end{proof}

\begin{theorem}
\label{th:majhai}
\maj\ is NP-complete even if each agent finds at most two objects acceptable.
\end{theorem}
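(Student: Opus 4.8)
The plan is to reduce from \textsc{x3c} again, but to reuse the construction of Theorem~\ref{th:khai} essentially verbatim and augment it with cheap ``amplifying'' gadgets that inflate the number of agents until the critical threshold $5\hat{n}+1$ of that reduction coincides with the majority threshold $\frac{n+1}{2}$. Recall from Theorem~\ref{th:khai} that the instance $I'$ on $18\hat{n}$ agents admits a matching whose improvement graph has maximum matching at most $5\hat{n}$ when $I$ has an exact cover, and that every matching of $I'$ has improvement-graph matching at least $5\hat{n}+1$ otherwise (this is exactly the dichotomy proved via Observation~\ref{lemma:ks-charac}). My goal is to shift this same dichotomy so that the two sides straddle $\frac{n+1}{2}$ rather than $5\hat{n}+1$.

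The key building block is a \emph{triangle gadget}: three agents $g^1,g^2,g^3$ and two objects $h^1,h^2$, where all three agents have the identical list $h^1 \succ h^2$. Since there are three agents but only two objects, in every matching at least one agent is unmatched; a short case check (using Observation~\ref{lemma:ks-charac}) shows that in \emph{every} matching both $h^1$ and $h^2$ are envied by distinct agents, so this component always contributes exactly $2$ to the maximum matching of the improvement graph. Each agent here finds only two objects acceptable, so the ``at most two objects'' restriction is preserved, and because the gadget is a separate connected component, improvement-graph matchings decompose additively across the base instance and the gadgets. This gadget is what makes majority hardness possible despite the bounded lists: three agents force two simultaneous improvements, a rate of $\tfrac23>\tfrac12$, so enough copies push the unavoidable number of improvers above half of all agents.

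I would then take $I'$ together with $q=8\hat{n}$ disjoint triangle gadgets, giving $n=18\hat{n}+3q=42\hat{n}$ agents, so that every improvement-graph matching equals its value on $I'$ plus $2q=16\hat{n}$. By Observation~\ref{lemma:ks-charac} a matching is majority stable exactly when its improvement graph has no matching of size $\ge\frac{n+1}{2}=21\hat{n}+\tfrac12$, that is, of size at most $21\hat{n}$. In the forward direction, an exact cover yields (via the Theorem~\ref{th:khai} matching extended by any matching on the gadgets) an improvement-graph matching of size at most $5\hat{n}+16\hat{n}=21\hat{n}$, so a majority stable matching exists. In the reverse direction, if $I$ has no exact cover then every matching of $I'$ already forces $\ge 5\hat{n}+1$, hence the whole instance forces $\ge 5\hat{n}+1+16\hat{n}=21\hat{n}+1>21\hat{n}$ improvers, so no majority stable matching exists. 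Membership in NP is already granted by Corollary~\ref{cor:k-verif}.

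The routine parts are the additive decomposition of the improvement graph and the exact-cover equivalence, both inherited from Theorem~\ref{th:khai}. The step I expect to require the most care is pinning down the gadget count: one must choose $q$ so that the base gap $5\hat{n}+1$ lands exactly on $\lceil\frac{n+1}{2}\rceil$, verifying the two inequalities $5\hat{n}+2q\le\lceil\frac{n+1}{2}\rceil-1$ and $5\hat{n}+1+2q\ge\lceil\frac{n+1}{2}\rceil$ simultaneously; the value $q=8\hat{n}$ is essentially forced, and the parity of $n=42\hat{n}$ must be checked so that the majority threshold sits strictly between the YES and NO values. The only genuinely new ingredient beyond Theorem~\ref{th:khai} is the triangle gadget together with the observation that bounded-length lists can nonetheless force a $\tfrac23$ fraction of agents to be simultaneously improvable.
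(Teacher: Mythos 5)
Your proof is correct and follows essentially the same route as the paper: pad the Theorem~\ref{th:khai} instance with $8\hat{n}$ small gadgets that each force exactly $2$ out of $3$ agents to be simultaneously improvable, so that the threshold $5\hat{n}+1$ is shifted onto the majority threshold $21\hat{n}+1$ for $n=42\hat{n}$. The one substantive difference is the gadget itself: the paper uses $3$ agents and $3$ objects with a common strict order of length three, whereas your gadget has $3$ agents and only $2$ objects with lists $h^1\succ h^2$; since the paper's gadget gives its agents three acceptable objects, your variant is actually the one that preserves the ``at most two acceptable objects'' restriction claimed in the theorem statement, so your version is, if anything, tighter.
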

\begin{proof}
We extend our hardness reduction in the proof of Theorem~\ref{th:khai}. To show the hardness of \maj, we add $8\hat{n}$ more gadgets to the instance $I'$, each consisting of $3$ agents and $3$ objects, such that all 3 agents have the same preference order over their three corresponding objects. This gadget is a small version of our example instance in Example~\ref{ex:cond}. It is easy to see that in any such gadget, there is a 3-stable matching, but there is no 2-stable matching. Hence, the new instance has $18\hat{n}+8\cdot 3\hat{n}=42\hat{n}$ agents and there is a $5\hat{n}+1+8\cdot 2\hat{n} = (21\hat{n}+1)$-stable matching if and only if there is an exact 3-cover. 
\end{proof}

We now apply a more general scaling argument than in the proof of Theorem~\ref{th:majhai} to show that finding a $cn$-stable matching is NP-complete for any non-trivial choice of~$c$.

\begin{theorem}
\label{th:khai_allk}
    \khai\ is NP-complete for any fixed constant $0<c<1$.
\end{theorem}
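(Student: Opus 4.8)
The plan is to reduce from \textsc{x3c} again, reusing the gadget of Theorem~\ref{th:khai} as the ``core'' of the construction, and then to pad the instance with independent copies of the hard gadget from Example~\ref{ex:cond} so as to tune the ratio $\frac{k}{n}$ to any prescribed constant $c$. The key observation driving the whole argument is that the construction of Theorem~\ref{th:khai} produces an instance whose answer to $(5\hat n+1)$-stability is \textsc{yes} iff the \textsc{x3c} instance has an exact cover, and that such a core instance can be combined with padding gadgets in a way that is additive in both the number of agents and the stability threshold. So I would aim to fix the single ratio $c$, and then choose how many copies of the core gadget and how many padding gadgets to include so that the desired stability threshold equals exactly $cn$, where $n$ is the total number of agents in the combined instance.

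First I would recall the two ``building blocks''. Block~A is the core reduction of Theorem~\ref{th:khai}: on $18\hat n$ agents it has a $(5\hat n+1)$-stable matching iff the underlying \textsc{x3c} instance admits an exact cover, and moreover in any matching at least $5\hat n$ objects are always envied (so without an exact cover the best achievable stability is exactly $(5\hat n+1)$, whereas with a cover it is $(5\hat n+1)$). Block~B is the padding gadget from Example~\ref{ex:cond}, scaled to any convenient size: a block of $q$ agents all sharing the identical preference order $o_1\succ\cdots\succ o_q$ over $q$ private objects admits a $q$-stable matching but no $(q-1)$-stable matching, contributing exactly $q-1$ forcibly-envied objects. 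The point is that padding gadgets are ``rigid''---they always contribute the same fixed surplus of envied objects regardless of how one matches them---so adding $t$ disjoint padding blocks, each of size $q$, increases the total agent count by $tq$ and shifts the achievable stability threshold additively by $t(q-1)$, independently of the \textsc{x3c} instance.

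The heart of the argument is then a counting/arithmetic step. Combining one core block (Block~A) with a suitable number of padding blocks (Block~B), the resulting instance has a fixed total of $n$ agents and admits a $k$-stable matching (for the appropriate $k$) iff the \textsc{x3c} instance has an exact cover. Concretely, with one core block on $18\hat n$ agents giving threshold offset $5\hat n$, and $t$ padding blocks of size $q$ giving offset $t(q-1)$, the overall instance has $n = 18\hat n + tq$ agents and the target threshold is $k = 5\hat n + t(q-1) + 1$. I would choose the pair $(t,q)$ (as functions of $\hat n$ and the fixed rational $c$) so that $k = \lceil cn\rceil$, i.e. so that $\frac{5\hat n + t(q-1)+1}{18\hat n + tq}$ lands on the constant $c$; since $t$ and $q$ are free positive integers scaling linearly with $\hat n$, the ratio $\frac{k}{n}$ can be pushed continuously (in the limit) to any value in $(0,1)$, and for each fixed rational $c$ one can solve for integer $t,q$ exactly. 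The correctness direction is then immediate: the padding blocks always realize their full surplus, so an exact cover in the core block yields a $k$-stable matching, and conversely a $k$-stable matching forces the core block to meet its own $(5\hat n+1)$-bound, which by Theorem~\ref{th:khai} forces an exact cover.

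The main obstacle I anticipate is purely arithmetic bookkeeping rather than conceptual: for a given fixed $c$ I must exhibit integer choices of $t$ and $q$ (possibly together with a small number of auxiliary ``one-sided'' gadgets that can bias the ratio either up or down) making $\frac{5\hat n+t(q-1)+1}{18\hat n+tq}$ equal to $c$ exactly, and not merely approximately, for all sufficiently large $\hat n$. This requires padding gadgets that can nudge the numerator and denominator in opposite directions---for instance, gadgets that add agents but contribute no forced envy (to lower $c$) versus gadgets like Block~B that add forced envy (to raise $c$). I would handle the two regimes $c$ small and $c$ large separately, using only the envy-free ``filler'' objects in the former case and only Block~B padding in the latter, and verify in each case that the floor/ceiling rounding of $cn$ can be absorbed into the $+1$ slack of the threshold. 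Once the integrality is settled, NP-membership follows from Corollary~\ref{cor:k-verif}, completing the proof.
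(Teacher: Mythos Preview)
Your proposal is correct and follows essentially the same approach as the paper: start from the core reduction of Theorem~\ref{th:khai} and pad with disjoint copies of the Example~\ref{ex:cond} gadget (to raise the ratio $k/n$) together with single agent--object pairs admitting a $1$-stable matching (to lower it). The paper's own proof is considerably terser on the arithmetic---it simply asserts that for any constant $c$ one can add sufficiently many scaling gadgets of either type---whereas you spell out the additivity and the integrality bookkeeping more explicitly.
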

\begin{proof}
Our construction in the proof of Theorem~\ref{th:khai} can be extended to \khai\ by scaling the instance. This scaling happens through the addition of instances that either admit no $cn$-stable matching even for a high $c$, or admit a $cn$-stable matching even for a low~$c$. The instance in Example~\ref{ex:cond} admits no $(n-1)$-stable matching. Constructing an instance with a 1-stable matching is easy: one $(1,o_1)$ edge suffices. For any constant $c$, we can add sufficiently many of these scaling instances to construct an instance that admits a $cn$-stable matching if and only if the original instance admits a majority stable matching.
\end{proof}

\subsubsection{Complete preferences}\hfill\\

We now extend our hardness proof in Theorem~\ref{th:khai_allk} to cover the case of complete preference lists as well.

\begin{theorem}
\label{th:kha}
    \kha\ is NP-complete for any fixed constant $0<c<1$. In particular, \cmaj\ is NP-complete.
\end{theorem}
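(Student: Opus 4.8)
The plan is to extend the reduction of Theorem~\ref{th:khai_allk} to complete preference lists by \emph{padding each agent's list with a private dummy object}, and then to re-run the scaling argument. Since the two families of scaling instances used in Theorem~\ref{th:khai_allk} can be made complete (the instance of Example~\ref{ex:cond} already has complete, identical preferences, and a one-edge instance admitting a $1$-stable matching is trivially completed), the only real work is to complete the core reduction of Theorem~\ref{th:khai} without disturbing the correspondence between exact covers and $k$-stable matchings.

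To complete the core instance $I'$, I would introduce for every agent $a$ a fresh \emph{private dummy} object $z_a$, and set the completed preference list of $a$ to be its original acceptable objects, followed by $z_a$, followed by all remaining objects in a fixed order, with all genuine objects placed at the very bottom (below every foreign private dummy). The point of this ordering is the following invariant, which I would isolate as a lemma: for any matching $M''$ in which every agent receives either an original-acceptable object or its own private dummy, no dummy object is envied, and the improvement graph restricted to the genuine objects coincides exactly with the improvement graph of the original incomplete instance. Indeed, an agent on its own dummy $z_a$ envies precisely its original acceptable objects (exactly as an unmatched agent did in $I'$), an agent on an original object $o$ envies exactly the original objects it ranked above $o$, and in both cases every genuine object ranked in the tail and every private dummy lies below the match and is therefore not envied. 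Consequently the forward direction transfers verbatim: from an exact cover I build $M''$ by keeping the matching of Theorem~\ref{th:khai} and assigning every previously unmatched agent to its own private dummy, so that the set of envied genuine objects is still the same $5\hat{n}$ objects and, by Observation~\ref{lemma:ks-charac}, $M''$ is $(5\hat{n}+1)$-stable (against the rescaled number of agents $n''$).

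The delicate part, and the step I expect to be the main obstacle, is the backward direction: a $k$-stable matching in the completed instance is a priori free to place agents on \emph{foreign} private dummies or on low-ranked genuine objects, and such placements create extra improvement edges that the clean analysis above does not cover. Here I would exploit the chosen tail ordering to show that these deviations are self-penalizing. An agent matched to a foreign dummy still ranks its \emph{own} dummy above its match, so a block of $t$ agents sitting on foreign dummies yields $t$ vertex-disjoint improvement edges $a \mapsto z_a$; likewise, because all genuine objects sit at the bottom, an agent matched to a genuine object it does not originally accept envies a large number of objects. Choosing the stability threshold $k$ just above the $5\hat{n}$ of the clean case forces all but a controlled number of agents onto original-acceptable objects or their own dummies, at which point the invariant applies and the original counting argument of Theorem~\ref{th:khai} (all $o_i$ objects are always envied, so at least $\hat{n}$ of the $p_j$ must be unenvied, which pins down an exact $3$-cover) goes through. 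Making the two thresholds match exactly---so that $k$-stability both admits the cover-based matching and rules out every non-cover matching---is the calculation that requires the most care.

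Finally, to obtain the full statement I would compose the completed core with completed copies of the scaling gadgets, merging everything into a single complete instance through the same private-dummy tail completion so that, for well-behaved matchings, envy remains localized within each gadget and the counts simply add. As in Theorem~\ref{th:khai_allk}, adjusting the number of copies of each scaling gadget moves the overall stability threshold to any prescribed value $cn$ with $0<c<1$; taking the threshold to be $\frac{n+1}{2}$ yields the \cmaj\ hardness as the announced special case.
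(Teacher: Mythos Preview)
Your construction---append a private dummy object and then the rest of the universe---is exactly the paper's, and your forward direction is the paper's as well. Where you diverge is in the backward direction, which you flag as the ``main obstacle'' and attack by a structural normalization of $M'$ (self-penalizing foreign-dummy placements, a carefully chosen tail ordering, threshold-matching calculations). None of this is needed. The paper simply \emph{projects}: given a $cn$-stable $M'$ in the completed instance, set $M=M'\cap E$ where $E$ is the original acceptability relation, and argue directly that $M$ is $cn$-stable in the incomplete instance. If some $M''$ in $I$ is preferred to $M$ by at least $k$ agents, each such agent receives in $M''$ an originally acceptable object, which by the list extension dominates her private dummy and everything below it; since in $M'$ she either holds her $M$-object or something no better than her dummy, she prefers $M''$ to $M'$ as well, contradicting $cn$-stability of $M'$. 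This works for \emph{any} tail ordering and for \emph{any} $M'$, so there is nothing delicate about foreign-dummy or low-genuine placements. A second simplification: the paper reduces from $c$-\textsc{hai} for the very same $c$ (already NP-complete by Theorem~\ref{th:khai_allk}), so the scaling is inherited and there is no need to re-assemble scaling gadgets after completion.
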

\begin{proof}

We start with an instance $I$ of \khai\ and create an instance $I'$ of \kha\ as follows. For each agent $i\in N$, we add a dedicated dummy object~$d_i$. Then, we extend the preferences of the agents in a standard manner: we append their dedicated dummy object to the end of their original preference list, followed by all other objects in an arbitrary order. 

Suppose first that there is a $cn$-stable matching $M$ in~$I$. We construct a matching $M'$ in $I'$ by keeping all edges of $M$ and assigning each unmatched agent in $M$ to their dummy object in~$M'$. As the improvement graph of $M'$ in $I'$ is the same as the improvement graph of $M$ in $I$, $M'$ is $cn$-stable in~$I'$.

Now suppose that there is a  $cn$-stable matching $M'$ in~$I'$.
Construct a matching $M$ by taking $M=M'\cap E$, where $E$ denotes the edges of the acceptability graph of $I$. We claim that $M$ is $cn$-stable in~$I$. Suppose that matching $M''$ is preferred to $M$ by more than $k$ agents in~$I$. We claim that $M''$ is preferred to $M'$ as well by more than $k$ agents in~$I'$. Indeed, each agent who prefers $M''$ to $M$ must be assigned to an object in $M''$ that is acceptable to them in $I$, so it is better than their dummy object. As each agent either obtains the same object in $M$ and $M'$, or is unassigned in $M$ and matched to their dummy object in $M'$, we get that each improving agent prefers $M''$ to~$M'$.
\end{proof}

\subsubsection{Ties in the preference lists}\hfill\\

From Theorems~\ref{th:khai_allk} and~\ref{th:kha} follows that \khati\ and \khat\ are both NP-complete for any constant $0<c<1$. Therefore, we investigate the preference restrictions \textsc{dc} and \textsc{sti}.

\begin{theorem}
\label{th:ksmallties}
\khati\ is NP-complete even if each agent puts their two acceptable objects into a single tie.
This holds for any fixed constant $0<c\le \frac{1}{2}$.
\end{theorem}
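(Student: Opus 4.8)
The plan is to reduce from \textsc{x3c}, but first I would adapt the improvement-graph characterization of Observation~\ref{lemma:ks-charac} to single-tie preferences, since the mechanism that drove the strict-preference proofs disappears here. The crucial structural point is that when an agent's two acceptable objects lie in a single tie, a matched agent is indifferent between its partner and its only other acceptable object, so it never envies anything; hence in the improvement graph $G_M$ every edge emanates from an agent left \emph{unmatched} by $M$. Consequently $M$ is $cn$-stable if and only if the bipartite graph between the set $U(M)$ of unmatched agents and the objects admits no matching of size at least $cn$. Writing $\nu(U)$ for the size of a maximum such matching, the whole question reduces to deciding whether some matching $M$ achieves $\nu(U(M)) < cn$.

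Before building the reduction I would record the a priori bound that pins down the hard regime. If $M$ is a maximum-cardinality matching, then any rematching of $U(M)$ can only reuse objects already covered by $M$ (a rematching edge to a free object would give an augmenting path, contradicting maximality), so $\nu(U(M)) \le \min(|M|, |U(M)|) \le n/2$. Thus a $(\lfloor n/2\rfloor + 1)$-stable matching always exists, which is exactly why the positive results begin at $c > \tfrac12$ and the reduction only needs to target $c \le \tfrac12$. The reduction must therefore be engineered so that the quantity $\min_M \nu(U(M))$ lands just below $cn$ precisely when an exact cover exists, and at or above $cn$ otherwise.

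For the gadgetry I would introduce one object per element and, per set, a cluster of agents with tied degree-two preferences wired so that ``selecting'' a set lets its agents pile onto a shared pair of objects, keeping their contribution to $\nu(U)$ small, whereas leaving a set unselected forces its agents onto distinct element-objects, inflating $\nu(U)$. To push $\min_M \nu(U(M))$ up into the $n/2$ regime I would add balancing gadgets that provably force roughly half of their agents to remain rematchable to distinct objects even under an optimal $M$ --- for instance doubled even cycles of objects with two agents per consecutive pair, where at most one agent per object can be matched and the surplus spreads over all cycle objects. Finally, following the scaling idea of Theorems~\ref{th:khai_allk} and~\ref{th:majhai}, I would pad with gadgets of controlled $\nu$ to slide the ratio $c = k/n$ down to any fixed constant $\le \tfrac12$, and the majority-stability case (\maj) would follow as the endpoint $c = \tfrac12 + \varepsilon$.

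The hard part will be the exact counting that makes the two directions meet at a single threshold. Unlike the strict case, I have no ``second-choice'' object with which to mark dissatisfied-but-matched agents, so all of the bookkeeping has to be carried out purely through \emph{which} agents are forced to stay unmatched and \emph{how concentrated} their acceptable objects are. The delicate direction is the converse: proving that when no exact cover exists, \emph{every} matching leaves at least $cn$ rematchable unmatched agents. I expect to argue this through a deficiency (Hall-type) analysis of $U(M)$ showing that any failure to realize an exact cover propagates into an unavoidable spread of the unmatched agents across distinct objects, while simultaneously verifying that the balancing and scaling gadgets contribute exactly their intended amount to $\nu(U)$ and never create enough slack to let a non-cover instance dip below $cn$.
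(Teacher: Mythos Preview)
Your opening structural observation is correct and is in fact sharper than what the paper makes explicit: with single-tie lists only unmatched agents envy, so $k$-stability reduces to bounding the maximum matching between $U(M)$ and the objects. Your a priori bound $\nu(U(M))\le n/2$ for maximum $M$ is also right and explains cleanly why the regime $c>\tfrac12$ is easy.

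Where you diverge from the paper is in the gadgetry, and here you are working much harder than necessary. The paper does not design new gadgets at all: it reuses the \emph{identical} construction from the strict-preference Theorem~\ref{th:khai} (the $18\hat n$ agents $s_j^\ell,t_j,d_i^1,d_i^2$ and $6\hat n$ objects $p_j,o_i$), merely declaring the two-object agents $s_j^\ell$ indifferent between $o_{j_\ell}$ and $p_j$. The argument then goes through almost verbatim, because under ties ``$p_j$ is envied'' simply becomes ``some agent adjacent to $p_j$ is unmatched'', which is the same combinatorial condition that drove the strict proof. In particular the converse direction you flag as delicate is immediate: the two dummy agents $d_i^1,d_i^2$ ensure every $o_i$ is always envied, and since each agent is adjacent to at most one $p_j$, all envied objects can be matched to distinct envious agents simultaneously --- no Hall-type deficiency analysis is needed.

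The scaling is likewise much simpler than your doubled-even-cycle proposal. The base construction gives threshold $5\hat n+1$ on $18\hat n$ agents ($c\approx 5/18$); to reach any $c$ up to $\tfrac12$ the paper just appends copies of the gadget ``two agents, one shared object'' (one agent is always unmatched and envies the object), and to reach smaller $c$ it appends copies of ``one agent, one object''. Your plan would likely work, but it invents machinery for a problem that the existing strict-case reduction already solves once one notices --- as you did --- that ties only delete envy edges from matched agents, which the original counting never relied on.
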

\begin{proof}
We use the same construction as in Theorem~\ref{th:khai}, on $18\hat{n}$ agents, except that the agents with two objects in their preference lists are indifferent between these objects. Let $I$ be an instance of \textsc{x3c} and $I'$ be the constructed instance. We will prove that there is a $(5\hat{n}+1)$-stable matching $M$ in $I'$ if and only if there is an exact 3-cover in~$I$. As the constructions are very similar, only small modifications to the proof of Theorem~\ref{th:khai} are needed.

\begin{claim}
If $I$ admits an exact 3-cover, then $I'$ admits a $(5\hat{n}+1)$-stable matching.
\end{claim}
\begin{claimproof}
Suppose that $S_{l^1},\dots,S_{l^{\hat{n}}}$ form an exact 3-cover. Construct a matching $M$ as follows: for each $j\in [3\hat{n}]$, match $t_j$ with $p_j$. For each $j\in \{ l^1,\dots,l^{\hat{n}}\}$ match $s_j^{\ell}$ with $o_{j_{\ell}}$ for $\ell \in [3]$. 

To prove that $M$ is $(5\hat{n}+1)$-stable, it is enough to show that at most $5\hat{n}$ objects are envied, so the improvement graph has at most $5\hat{n}$ objects with non-zero degree. For each $j\in \{ l^1,\dots,l^{\hat{n}}\}$, the object $p_j$ is not envied by any agent: all of $s_j^1,s_j^2,s_j^3$, and $t_j$ got an object. Hence, there are at most $2\hat{n}$ objects of type $p_j$ and at most $3\hat{n}$ objects of type $o_i$ envied, proving our claim.
\end{claimproof}

\begin{claim}
If $I'$ admits a $(5\hat{n}+1)$-stable matching, then $I$ admits an exact 3-cover.
\end{claim}
\begin{claimproof} Let $M$ be a $(5\hat{n}+1)$-stable matching. First we prove that in $M$ at most $5\hat{n}$ objects are envied by any agent, because we can construct a matching $M'$ that gives all envied objects to an agent who envies them. 
Each object $o_i$ is envied in any matching by at least one of $d_i^1$ and $d_i^2$ and can be given to the envious agent in~$M'$. Regarding envied objects of type $p_j$, one agent only finds at most one $p_j$ object acceptable, which implies that each envied $p_j$ can be assigned to an envious agent in~$M'$. Therefore, $M'$ is indeed a matching.


As at most $5\hat{n}$ objects can be envied and all $o_i$ objects are envied in $M$, at most $2\hat{n}$ envied objects are of type~$p_j$. This implies that at least $\hat{n}$ objects of type~$p_j$ are not envied by any agent. As these objects are acceptable to their $s_j^{\ell}$, $\ell \in [3]$ agents, those agents must all get their other acceptable object of type~$o_i$. As $M$ is a matching, these sets constitute an exact 3-cover.
\end{claimproof}

Finally, we can add sufficiently many instances to $I'$, each of which consists of two agents and one object, with both agents only considering their one object acceptable. In this small instance, there is no $cn$-stable matching with $c\le \frac{1}{2}$, so hardness holds for any such $c>\frac{5}{18}$. Similarly, we can add instances with one agent and one object to prove the hardness for $0 < c < \frac{5}{18}$.
\end{proof}

We complement Theorem~\ref{th:ksmallties} with positive results in some restricted cases. For example, while \khati\ is hard if each agent is indifferent between their acceptable objects (restriction \textsc{sti}), \tmaj\ becomes solvable in this case.

\begin{theorem}
\label{thm:alltie+dichHA}
The following statements hold:
\begin{enumerate}
    \item If each agent's preference list is a single tie in \tmaj, a majority stable matching exists and can be found in $O(\sqrt{n}m)$ time.
    \item If each agent's preference list is dichotomous and complete in \ctmaj\ with $|O|\ge |N|$, then a majority stable matching exists and can be found in $O(\sqrt{n}m)$ time.
    \item  If each agent's preference list is dichotomous and complete in \ctmaj\ with $|O|<|N|$, a $\left(\frac{2n}{3}+1\right)$-stable matching exists and can be found in $O(\sqrt{n}m)$ time.
\end{enumerate} 
\end{theorem}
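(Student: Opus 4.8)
The plan is to lean on Observation~\ref{lemma:ks-charac}: a matching $M$ is $k$-stable precisely when its improvement graph $G_M$ contains no matching of size $k$. In each of the three cases I would first describe $G_M$ explicitly under the stated restriction, then exhibit a matching $M$ obtained from a single maximum-matching computation and bound the size $\mu$ of a maximum matching in $G_M$. The $O(\sqrt n m)$ running time is then immediate, as the only real work is one maximum-matching call.

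For statement~1 (restriction \textsc{sti}) every acceptable object of an agent is tied, so an agent can strictly improve only by going from unmatched to matched; hence $G_M$ consists exactly of the edges joining the agents left unmatched by $M$ to their acceptable objects. I would let $M$ be a maximum matching of the acceptability graph with unmatched agent set $U$, so that $\mu$ equals the size of a maximum matching between $U$ and $O$. An augmenting-path argument gives $\mu \le |M|$: any object covered by such a matching must already be matched by $M$, since otherwise an unmatched agent and a free object would form an augmenting path contradicting maximality. Combined with $\mu \le |U| = n - |M|$ this yields $\mu \le \min(|M|, n-|M|) \le n/2 < \frac{n+1}{2}$, so $M$ is majority stable.

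For statements~2 and~3 (restriction \textsc{dc}) an agent improves either by moving from unmatched to any object or from a bad object to a good one, so the edges of $G_M$ run from the unmatched agents to all objects and from the bad-matched agents to their good objects. I would build $M$ by first taking a maximum matching $M_g$ of the bipartite good graph (agents against the objects they rank good), of size $g^{*}$ and covering the object set $O_g$, and then assigning every remaining agent an arbitrary still-free object. The same augmenting argument shows that every good object envied by a bad-matched agent already lies in $O_g$, for otherwise that agent, unmatched in $M_g$, together with the free good object would augment $M_g$. Writing $u$ for the number of unmatched agents, I would then derive three bounds on $\mu$: only bad-matched and unmatched agents carry improvement edges, so $\mu \le n - g^{*}$; the bad-matched agents reach only the $g^{*}$ objects of $O_g$, so $\mu \le u + g^{*}$; and when $|O| < |N|$ every object is used, giving $\mu \le |O| = n - u$.

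These three quantities sum to exactly $2n$, so their minimum is at most $\frac{2n}{3}$, whence $\mu \le \frac{2n}{3} < \frac{2n}{3}+1$ and statement~3 follows. When $|O| \ge |N|$ all agents can be matched, so $u = 0$ and the first two bounds alone give $\mu \le \min(g^{*}, n-g^{*}) \le n/2 < \frac{n+1}{2}$, proving statement~2. The main obstacle is pinning down the bound $\mu \le u + g^{*}$ through the augmenting argument and then noticing the pigeonhole identity that the three upper bounds add up to $2n$; the description of $G_M$ and the other counting bounds are routine, with maximality of the chosen matchings driving every augmenting step.
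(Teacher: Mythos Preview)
Your proposal is correct and follows the same underlying strategy as the paper: take a maximum matching (in the acceptability graph for statement~1, in the good-edge graph and then greedily extended for statements~2 and~3), and use an augmenting-path observation to bound the size of a maximum matching in $G_M$; the three inequalities you isolate, $\mu\le n-g^{*}$, $\mu\le u+g^{*}$, and $\mu\le |O|=n-u$, are exactly the bounds the paper uses. The only difference is cosmetic: for statement~3 you combine these bounds via the pigeonhole identity that they sum to $2n$, whereas the paper arrives at the same conclusion through a short case analysis on $x=g^{*}$ and $y=|O|-g^{*}$.
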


\begin{proof}
We prove each statement separately.
\begin{enumerate}
    \item We claim that any maximum size matching $M$ is majority stable in this case. 
If $|M|\ge \frac{n}{2}$, then $M$ is obviously majority stable, as at least half of the agents obtain an object. If $|M|<\frac{n}{2}$, then $M$ is also majority stable. Suppose there is a matching $M'$, where more than $\frac{n}{2}$ agents improve. Then $|M'| > \frac{n}{2}$, contradicting the fact that $M$ with $|M|<\frac{n}{2}$ is a maximum size matching.

\item We first find a maximum-size matching in the graph containing only the first-choice edges for each agent. Then we extend this matching by assigning every so far unmatched agent a second-choice object. Note that this is possible as the preferences are dichotomous and complete. In this matching, agents can only improve by switching to a first-choice object from a second-choice object. This cannot be the case for a majority of the agents, because we started with a maximum-size matching in the graph containing only the first-choice edges for each agent. So the constructed matching is majority stable.

\item If $|O| \leq \frac{2}{3}n$, then the statement is obvious. Otherwise, let $M$ be a maximum matching (so each object is matched as the preferences are complete) that matches as many agents to a first-choice object as possible---constructed the same way as in the previous case. If $M$ matches at least $\frac{n}{3}$ agents to first-choice objects, then at most $\frac{2}{3}n$ agents can improve, so $M$ is $\left(\frac{2}{3}n+1\right)$-stable. Otherwise, let $x <\frac{n}{3}$ denote the number of agents who get a first-choice object in $M$ and $y = |O|-x$ denote the number of agents who get a second-choice object. Clearly, at most $x+(n-x-y)=n-y$ agents can improve in a different matching, because at most $x$ can improve by getting a first choice (by the choice of $M$) and at most $n-x-y$ can improve by a second choice (only the ones unmatched). Hence, if $M$ is not $(\frac{2}{3}n+1)$-stable, then $y<\frac{n}{3}$ and $x<\frac{n}{3}$, contradicting that we assumed that the number of objects $x+y$ is at least $\frac{2}{3}n$.\qedhere
\end{enumerate}
\end{proof}

The first two statements and the fact that $k$-stable matchings are also $k+1$-stable imply the following.  

\begin{mycorollary}
\khati\ is solvable in $O(\sqrt{n}m)$ time for any $c> \frac{1}{2}$ when each preference list is a single tie.
The same holds for \khat\ with dichotomous preferences.
\end{mycorollary}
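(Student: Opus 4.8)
The goal is to prove the Corollary, which has two parts, both following by the general principle that $k$-stability implies $(k+1)$-stability, combined with the earlier positive results in Theorem~\ref{thm:alltie+dichHA}.

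The plan is to reduce each claim to a statement already proved in Theorem~\ref{thm:alltie+dichHA}. For the first claim, about \khati\ with single-tie preferences: by statement (1) of Theorem~\ref{thm:alltie+dichHA}, when each agent's preference list is a single tie, a majority stable matching exists and can be found in $O(\sqrt{n}m)$ time. Recall that majority stability is exactly $\frac{n+1}{2}$-stability. Since the excerpt established that $k$-stability implies $(k+1)$-stability, a $\frac{n+1}{2}$-stable matching is automatically $cn$-stable for every $c$ with $cn \ge \frac{n+1}{2}$, i.e.\ for every $c > \frac{1}{2}$ (as $n$ grows, $\frac{n+1}{2} \le cn$ holds whenever $c>\frac12$). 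Thus the same matching produced by the algorithm of Theorem~\ref{thm:alltie+dichHA}(1) already witnesses $cn$-stability, and no new computation is required; the running time carries over verbatim.

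For the second claim, about \khat\ with dichotomous (and complete) preferences: I would split according to whether $|O| \ge |N|$. If $|O| \ge |N|$, statement (2) of Theorem~\ref{thm:alltie+dichHA} gives a majority stable, i.e.\ $\frac{n+1}{2}$-stable, matching in $O(\sqrt{n}m)$ time, which as above is $cn$-stable for every $c>\frac{1}{2}$ by the monotonicity of $k$-stability in $k$. If instead $|O| < |N|$, statement (3) gives a $\left(\frac{2n}{3}+1\right)$-stable matching in the same time bound; since $\frac{2n}{3}+1 \le cn$ fails in general for $c$ slightly above $\frac12$, I must be a little careful here. However, for the corollary's claimed range $c>\frac{1}{2}$ the cleanest route is to observe that a $\left(\frac{2n}{3}+1\right)$-stable matching is certainly $\frac{n+1}{2}$-stable-or-weaker in the wrong direction, so instead I would note that in the $|O|<|N|$ case one can still appeal directly to the majority-stability argument: the construction in statement (2) does not actually use $|O|\ge|N|$ to guarantee majority stability, only to guarantee that every agent can be assigned a second-choice object; when $|O|<|N|$ one assigns as many as possible, and the same counting shows no majority can improve. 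Hence a majority stable matching exists in the dichotomous complete case regardless of the sizes, giving $cn$-stability for all $c>\frac12$.

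The main obstacle I anticipate is the boundary bookkeeping for the dichotomous case when $|O|<|N|$: statement (3) only claims $\frac{2n}{3}$-stability, which is weaker than majority stability, so I cannot quote it directly to reach all $c>\frac12$. The resolution is to verify that the matching constructed there—maximizing first-choice assignments among maximum matchings—is in fact majority stable whenever that is achievable, and to confirm that the cited $O(\sqrt{n}m)$ running time (dominated by the maximum-matching computation via Hopcroft--Karp, as in Corollary~\ref{cor:k-verif}) applies uniformly across all these subcases. Once that is pinned down, the corollary follows immediately from the implication ``$k$-stable $\Rightarrow$ $(k+1)$-stable'' applied with $k=\frac{n+1}{2}$.
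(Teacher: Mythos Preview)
Your treatment of the first claim (single-tie preferences) and of the dichotomous case with $|O|\ge |N|$ is exactly what the paper does: quote statements (1) and (2) of Theorem~\ref{thm:alltie+dichHA} and use that $k$-stability implies $(k{+}1)$-stability. The paper's entire proof is the single sentence ``The first two statements and the fact that $k$-stable matchings are also $(k{+}1)$-stable imply the following,'' so on this part you match the paper.

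The genuine gap is in your handling of $|O|<|N|$ for the dichotomous case. You assert that ``the construction in statement (2) does not actually use $|O|\ge|N|$ to guarantee majority stability'' and that ``the same counting shows no majority can improve.'' This is false. The counting in statement~(2) relies on every agent being matched, so that an agent can only improve by moving from a second-choice object to a first-choice one. When $|O|<|N|$, unmatched agents can improve by receiving \emph{any} object, including a bad one, and the argument collapses. Concretely, take $|N|=3$, $|O|=2$, with $o_1$ good and $o_2$ bad for all three agents: whichever matching you pick, two agents can simultaneously improve (one by grabbing $o_1$, the unmatched one by grabbing $o_2$), so no majority stable matching exists. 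More globally, Theorem~\ref{th:khat}(2) shows that \khat\ with dichotomous preferences and $|O|<|N|$ is $\NP$-complete for all $c\le\frac{2}{3}$, so a $cn$-stable matching cannot always exist in that range, and hence no $O(\sqrt{n}m)$ construction can produce one.

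In short: the paper only invokes statements (1) and (2), so the dichotomous part of the corollary is to be read under the hypothesis $|O|\ge|N|$ inherited from statement~(2). Your attempt to extend it to $|O|<|N|$ is not a harmless strengthening---it is provably wrong. Drop that paragraph and your argument coincides with the paper's.
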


Our last result for the house allocation model is valid for the preference restriction  \textsc{dc}.

\begin{theorem}
\label{th:khat}
\khat\ with dichotomous preferences is NP-complete \begin{enumerate}
    \item for any fixed $c\le \frac{1}{2}$ even if $|N| = |O|$;
    \item if $|O|<|N|$, then even for any fixed constant $c\le \frac{2}{3}$.
\end{enumerate}

\end{theorem}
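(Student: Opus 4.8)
The plan is to reduce from \textsc{x3c}, working throughout with the improvement-graph viewpoint of Observation~\ref{lemma:ks-charac}. Under dichotomous complete preferences the graph $G_M$ has a transparent shape: an agent matched to a good object contributes no edge, an agent matched to a bad object sends an edge to each of its good objects, and an unmatched agent sends an edge to \emph{every} object, since being unmatched is worse than any match. Deciding whether a $cn$-stable matching exists therefore becomes the question of whether one can choose $M$ so that the \emph{unsatisfied} agents---those on a bad object or left unmatched---together with the objects they envy span a matching of size smaller than $cn$. The whole reduction is organized around making this unavoidable matching small when an exact cover exists, and forcing it up to the $c$-threshold when no cover does.

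The main obstacle, and the reason the \textsc{dc} case does not simply follow from the incomplete reductions of Theorems~\ref{th:khai} and~\ref{th:khai_allk}, is completeness itself: every object is now good or bad for every agent, and a single unmatched agent envies \emph{all} objects at once. One can therefore neither leave the auxiliary agents of the earlier gadgets unmatched nor take far fewer objects than agents without flooding $G_M$ with envy. My base gadget keeps the element-objects $o_i$ as the only contested resource and, for each set $S_j=\{j_1,j_2,j_3\}$, introduces three selector agents with private good fallback objects, arranged so that (i) a matching derived from an exact cover places every agent on a good object and leaves $G_M$ almost empty, while (ii) when no exact cover exists, in every matching a constant fraction of the selectors or element-demanders are pushed onto bad objects while the corresponding $o_i$ is occupied, and these agents envy pairwise-distinct objects, yielding a large matching in $G_M$.

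For statement~(1) a single balanced construction with $|N|=|O|$ handles the whole range $c\le\tfrac12$: an exact cover places every agent on a good object, so $G_M$ is empty and $M$ is $cn$-stable for every $c$, whereas without a cover every matching leaves at least $\tfrac{n}{2}$ agents unsatisfied on pairwise-distinct envied objects, so $G_M$ has a matching of size $\ge\tfrac{n}{2}\ge cn$. The ceiling $\tfrac12$ is exactly where Theorem~\ref{thm:alltie+dichHA}(2) takes over, since majority stability is always attainable once $|O|\ge|N|$. For statement~(2), where $|O|<|N|$, the extra leverage comes from the agents that must remain unmatched: each envies every object and contributes a fixed block to $G_M$. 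For a target $c\le\tfrac23$ I would size the object shortage---about $\tfrac{n}{3}$ unmatched agents near the top of the range---and attach dichotomous complete scaling blocks as in the proof of Theorem~\ref{th:khai_allk}: zero-envy agent--object pairs, always matchable to a good object, dilute $n$ to reach the smaller values of $c$, while object-short blocks that always force envy reach the larger ones, with the ceiling $\tfrac23$ again matching the tight guarantee of Theorem~\ref{thm:alltie+dichHA}(3).

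The delicate and error-prone part throughout is the bookkeeping. The selectors, demanders and scaling blocks must be proportioned so that, for each fixed target $c$, the unavoidable envy in the no-cover case lands at or above $cn$ while the cover case stays strictly below, and---crucially---so that the construction never crosses the stability ceilings of Theorem~\ref{thm:alltie+dichHA}, beyond which a stable matching would always exist and the reduction would collapse. I expect this threshold-matching counting, rather than the gadget's qualitative behaviour, to be the hard part of the argument.
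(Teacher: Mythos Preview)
Your plan for statement~(1) cannot work as stated. You promise a single balanced construction in which an exact cover yields a matching $M$ with \emph{empty} improvement graph $G_M$, while the absence of a cover forces every matching to leave at least $\tfrac{n}{2}$ agents on bad objects. But ``$G_M$ is empty'' just means every agent is matched to a good object, which is equivalent to the bipartite graph of good edges admitting a matching that saturates $N$. If your reduction really had the property ``$I$ has an exact cover $\Longleftrightarrow$ the good-edge graph of $I'$ has an $N$-saturating matching,'' you would have reduced \textsc{x3c} to bipartite maximum matching and shown $\textsc{x3c}\in\P$. So at least one direction of your dichotomy must fail: either some no-cover instances will also admit a perfect good-matching (collapsing the reduction), or the yes-case does not actually yield an empty $G_M$.

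The paper avoids this trap by \emph{not} aiming for an empty improvement graph in the yes-case. It starts from the single-tie construction of Theorem~\ref{th:ksmallties}, adds $12\hat{n}$ dummy objects to reach $|N|=|O|=18\hat{n}$, and ranks original acceptable objects as first-choice and everything else as second-choice. In this instance the yes-case gives a matching that is $(5\hat{n}+1)$-stable---not $1$-stable---because the dummy agents $d_i^1,d_i^2$ still envy $o_i$ regardless; the no-case forces a strictly larger matching in $G_M$. The gap is between $5\hat{n}$ and $5\hat{n}+1$, not between $0$ and $\tfrac{n}{2}$, and this nonzero baseline is exactly what keeps the reduction from collapsing to a polynomial matching test. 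The range $0<c\le\tfrac12$ is then covered by separate padding gadgets (dedicated good objects for small $c$, two-agent two-object blocks for large $c$), not by a single universal instance. Your proposal for statement~(2) inherits the same flaw, since it is built on top of the same base construction; the paper instead layers three-agent two-object blocks onto its working base to push the threshold up to $\tfrac23$.
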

\begin{proof}
We extend the reduction used in the proof of Theorem~\ref{th:ksmallties}. First we prove hardness for $k = \frac{5}{18}$, then we pad the instance to cover $0 < c \leq \frac{1}{2}$, and finally, we extend our proof to $c\le \frac{2}{3}$ with the assumption that $|O|<|N|$.

We add another $12\hat{n}$ dummy objects to the \khati\ instance and extend the preferences of the agents such that they rank their originally acceptable objects first, and all the other objects---including the $12\hat{n}$ dummy objects---second. Since the acceptability graph is complete, and $|N| = |O|$, it is clear that if there exists a $(5\hat{n}+1)$-stable matching, then there exists a $(5\hat{n}+1)$-stable matching in which all agents are matched.

We claim that there is a $(5\hat{n}+1)$-stable matching in this modified instance $I'$ if an only if there is a $(5\hat{n}+1)$-stable matching in the original \khati\ instance~$I$. 

Indeed, if there is such a matching $M$ in $I$, then all original objects are matched and extending $M$ in $I'$ by matching the unmatched agents to the dummy objects arbitrarily produces a matching $M'$ that must be $(5\hat{n}+1)$-stable, as only those agents can improve who got a dummy object, and only by getting a first-choice, hence originally acceptable, object. 

If there is a $(5\hat{n}+1)$-stable agent-complete matching in $I'$, then among the agents who are matched to second-choice objects, at most $5\hat{n}$ can improve simultaneously. Specifically, there must be a matching in the graph $G'$ induced by the best object edges for each agent, such that among the unmatched agents at most $5\hat{n}$ can be matched in $G'$. Such a matching must be $(5\hat{n}+1)$-stable in~$I$.

To prove hardness for all $0 < c \le \frac{1}{2}$, we either add agents with a dedicated first-ranked object for small $c$ or multiple instances with two agents and two objects, one as their only first-ranked object and the other one ranked second by both of them for larger $c$. At least half of those agents will always be able to improve even after we make the preferences complete by adding all remaining edges and setting the preferences for the originally unacceptable objects as second.

Finally, suppose that $|O|<|N|$. We extend the hardness for $\frac{1}{2}\le c\le \frac{2}{3}$. For this, we further add sufficiently many copies of an instance with two objects and three agents, such that one object is best, while the other object is second-best for all three agents. Finally, we add the remaining edges as second-ranked edges. Let $x$ be the number of copies we added and let $I''$ be the new instance. We claim that there is a $\left(5\hat{n}+\frac{2}{3}x+1\right)$-stable matching in $I''$ if and only if there is a $\left(5\hat{n}+1\right)$-stable matching in~$I'$. In one direction, if $M'$ is $\left(5\hat{n}+1\right)$-stable in $I'$, then we create $M''$ by matching two out of the three agents in each added instance to their corresponding two objects. Then, $M''$ is $\left(5\hat{n}+\frac{2}{3}x+1\right)$-stable, because at most $\frac{1}{3}x+5\hat{n}$ agents can improve by getting first-ranked objects and at most $\frac{1}{3}x$ agents can improve by getting matched. In the other direction, if there is a $\left(5\hat{n}+\frac{2}{3}x+1\right)$-stable matching $M''$ in~$I''$, then we claim that $M''$ restricted to $I'$ is $\left(5\hat{n}+1\right)$-stable in~$I'$. Otherwise, more than $5\hat{n}+1$ agents could improve by getting a first-ranked object among the ones in $I'$, and at least $\frac{x}{3}$ agents could improve by getting a first-ranked object among the newly added agents and there are at least $\frac{x}{3}$ unmatched agents, who all could improve with any object left, which is altogether more than $5\hat{n}+\frac{2}{3}x+1$, contradiction.
\end{proof}

\section{The marriage and roommates models}
\label{se:mr}

In this section, we settle most complexity questions in the marriage and roommates models. Just as in Section~\ref{se:ha}, we first prove that verification can be done in polynomial time in Section~\ref{se:smsrver} and then turn to the existence problems in Section~\ref{se:smsrex}.

\subsection{Verification}
\label{se:smsrver}

\begin{theorem}
    Verifying whether a matching is $k$-stable can be done in $O(n^3)$ time for any $k \in \mathbb{N}$, both in the marriage and roommates models, even the preference lists contain ties.
\label{th:versmsr}
\end{theorem}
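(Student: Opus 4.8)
The plan is to reduce $k$-stability verification to a single maximum weight matching computation. Unlike the house allocation model, where objects are passive and the improvement graph of Observation~\ref{lemma:ks-charac} reduces verification to finding a large \emph{cardinality} matching, here both endpoints of every edge are agents who may or may not benefit from being matched to each other. A single edge of a rival matching $M'$ can therefore make zero, one, or two of its endpoints strictly better off, and it is the total number of beneficiaries that we must control.

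Concretely, for the given matching $M$ I would build a weighted copy $H$ of the acceptability graph on the agent set, assigning to each acceptable edge $\{i,j\}$ the weight
\[
w(i,j) = \llbracket\, j \succ_i M(i)\, \rrbracket + \llbracket\, i \succ_j M(j)\, \rrbracket \in \{0,1,2\},
\]
that is, the number of endpoints that strictly prefer being matched to the other over their partner in $M$ (ties and unmatched partners are handled automatically by the strict comparison). The key observation is that for any matching $M'$, the number of agents who strictly prefer $M'$ to $M$ equals $\sum_{\{i,j\}\in M'} w(i,j)$: every beneficiary $i$ is matched in $M'$ to some partner $j$ and is counted exactly once, through the edge $\{i,j\}$, whereas agents unmatched in $M'$ (for whom being unmatched is the worst outcome) contribute nothing. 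Hence $M$ fails to be $k$-stable if and only if $H$ admits a matching of weight at least $k$, so it suffices to compute a maximum weight matching in $H$ and compare its value with $k$.

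It then remains to invoke the appropriate maximum weight matching algorithm. In the marriage model $H$ is bipartite, so the Hungarian method finds a maximum weight matching in $O(n^3)$ time. In the roommates model $H$ may be non-bipartite, and I would appeal to Edmonds' blossom algorithm for maximum weight matching in general graphs, which likewise runs in $O(n^3)$ time. Constructing $H$ costs $O(m) = O(n^2)$ time, so the overall bound is $O(n^3)$ in both models, for every fixed $k$ and also with ties present.

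The main conceptual obstacle is exactly the step that separates this theorem from its house allocation counterpart: the correct accounting of improving agents once both endpoints carry preferences. Recognizing that this count is additive over the edges of $M'$, and is therefore captured \emph{exactly} by a $\{0,1,2\}$-weighted matching objective, is what turns verification into a standard, polynomially solvable maximum weight matching problem. Once the weight formula is in place, the remaining work (checking its correctness on unmatched agents and under ties, and quoting the matching algorithms) is routine.
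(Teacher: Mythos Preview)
Your proposal is correct and follows essentially the same approach as the paper: both define the weight of an edge as the number of its endpoints that strictly prefer it to their partner in $M$, observe that $M$ is $k$-stable if and only if the maximum weight of a matching is below $k$, and invoke an $O(n^3)$ maximum weight matching algorithm. Your write-up is in fact more detailed than the paper's (you spell out why unmatched agents and ties cause no trouble, and you separate the bipartite and non-bipartite algorithmic citations), but the underlying reduction is identical.
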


\begin{proof}
Let $M$ be a matching in a \ksri\ instance. We create an edge weight function $\omega$, where $\omega (e)$ is the number of end vertices of $e$ who prefer $e$ to~$M$. From the definition of $k$-stability follows that $M$ is $k$-stable if and only if maximum weight matchings in this graph have weight less than~$k$. Such a matching can be computed in $O(n^3)$ time \citep{Gab76}.
\end{proof}

\subsection{Existence}
\label{se:smsrex}

To 
show that a $\left( \frac{5}{6}n+1 \right)$-stable matching exists even in the roommates model with ties, we first introduce the concept of stable partitions, which generalizes the notion of a stable matching. Let $(N, \succ )$ be a stable roommates instance. A \emph{stable partition} of $(N, \succ )$ is a permutation $\pi :N\to N$ such that for each $i\in N$: \begin{enumerate}
    \item if $\pi (i)\ne \pi^{-1}(i)$, then $(i,\pi (i)),(i,\pi^{-1}(i))\in E$ and $\pi (i) \succ_{i}\pi^{-1}(i)$;
    \item for each $(i,j) \in E$, if $\pi (i)=i$ or $j \succ_{i}\pi^{-1}(i)$, then $\pi^{-1}(j) \succ_{j}i$.
\end{enumerate}

A stable partition defines a set of edges and cycles. \citet{tan1991necessary} showed that a stable partition always exists and that one can be found in polynomial time. He also showed that a stable matching exists if and only if a stable partition does not contain any odd cycle.

\begin{theorem}
\label{th:ksrpol}
    For any $c>\frac{5}{6}$, a $cn$-stable matching exists in \ksri\ and can be found efficiently.
\end{theorem}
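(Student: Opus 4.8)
The plan is to build the $cn$-stable matching directly from a stable partition. First I would break all ties arbitrarily to obtain a strict instance $(N,\succ)$; since tie-breaking only refines preferences, any agent who strictly improves with respect to the original (tied) preferences also strictly improves with respect to $\succ$, so it suffices to bound the number of improving agents in the strict instance. By the cited result of Tan I then compute, in polynomial time, a stable partition $\pi$ of $(N,\succ)$. Recall that $\pi$ decomposes $N$ into fixed points $F$, transpositions (2-cycles), and odd cycles of length at least $3$. I define the matching $M$ by taking every 2-cycle as a matched edge and, along each odd cycle $v_0\to v_1\to\dots\to v_{2\ell}\to v_0$ (where $\pi(v_t)=v_{t+1}$), matching the $\ell$ consecutive pairs $(v_0,v_1),\dots,(v_{2\ell-2},v_{2\ell-1})$ and leaving $v_{2\ell}$ unmatched. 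All these edges exist by condition~1 of stable partitions, every matched agent $i$ satisfies $M(i)\in\{\pi(i),\pi^{-1}(i)\}$ and hence $M(i)\succsim_i\pi^{-1}(i)$, and the set $U$ of $M$-unmatched agents is exactly $F$ together with one leftover vertex per odd cycle; write $L$ for this set of leftover vertices.

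The heart of the argument is a structural observation that I expect to be the main obstacle to state cleanly. Let $M'$ be any matching and suppose agent $i$ strictly improves, i.e.\ $j:=M'(i)\succ_i M(i)$. If $i$ is a fixed point, then the hypothesis ``$\pi(i)=i$'' of condition~2 holds; otherwise $j\succ_i M(i)\succsim_i\pi^{-1}(i)$, so the hypothesis ``$j\succ_i\pi^{-1}(i)$'' holds. Either way, condition~2 applied to the edge $(i,j)$ yields $\pi^{-1}(j)\succ_j i$. I would then read off what this says about $j$: if $j$ is $M$-matched then $M(j)\succsim_j\pi^{-1}(j)\succ_j i=M'(j)$, so $j$ does \emph{not} improve; thus any improving agent $j$ must be $M$-unmatched, i.e.\ $j\in U$. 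Applying this symmetrically to both endpoints of an $M'$-edge whose two endpoints both improve shows that both endpoints lie in $U$; moreover they cannot both be fixed points, since $\pi^{-1}(j)\succ_j i$ with $i,j\in F$ would read $j\succ_j i$, contradicting that $i$ is acceptable to $j$.

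With this lemma the final count is short. Let $I$ be the set of improving agents for a fixed $M'$; every $i\in I$ is $M'$-matched. I split the $M'$-edges incident to $I$ into those with both endpoints improving (type II) and those matching an improving agent to a non-improving one (type IN). By the lemma every type-II edge has both endpoints in $U=F\cup L$ and is of the form $F$--$L$ or $L$--$L$; writing $a,b$ for the numbers of these two kinds, the $L$-vertices used satisfy $a+2b\le |L|$, hence $a+b\le |L|$. Type-IN agents are matched to distinct non-improving agents, so there are at most $n-|I|$ of them. Therefore $|I|\le 2(a+b)+(n-|I|)$, giving $|I|\le(a+b)+\tfrac{n}{2}\le|L|+\tfrac{n}{2}$. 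Since each odd cycle contributes exactly one vertex to $L$ and has at least three vertices, $|L|\le\tfrac{n}{3}$, so $|I|\le\tfrac{n}{3}+\tfrac{n}{2}=\tfrac{5}{6}n$. Hence no matching lets at least $cn>\tfrac{5}{6}n$ agents improve, so $M$ is $cn$-stable for every $c>\tfrac{5}{6}$; as $\pi$ and $M$ are computed in polynomial time, $M$ is found efficiently. The delicate points to get right are the case analysis in the lemma (fixed points versus cycle vertices, and the use of $M(i)\succsim_i\pi^{-1}(i)$) and the bookkeeping $a+b\le|L|\le\tfrac{n}{3}$ that extracts exactly the constant $\tfrac{5}{6}$.
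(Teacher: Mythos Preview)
Your overall strategy coincides with the paper's---build $M$ from a stable partition and bound the type-II $M'$-edges (both endpoints improving)---but your key structural claim is false. The case split ``if $i$ is a fixed point \dots; otherwise $M(i)\succsim_i\pi^{-1}(i)$'' overlooks the possibility $i\in L$: a leftover vertex is unmatched in $M$, so $M(i)$ is strictly worse than the acceptable partner $\pi^{-1}(i)$, and condition~2 cannot be invoked from $i$'s side. Consequently ``both endpoints lie in $U$'' can fail. Concretely, take a single $3$-cycle $v_0\to v_1\to v_2\to v_0$ in $\pi$ with $M=\{(v_0,v_1)\}$ and leftover $v_2$; for $M'=\{(v_1,v_2)\}$ both $v_1$ and $v_2$ strictly improve, yet $v_1$ is $M$-matched and hence $v_1\notin U$. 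Such matched--$L$ edges are type~II but counted in neither your $a$ nor your $b$, so the identity underlying $|I|\le 2(a+b)+(n-|I|)$ is not justified as written.

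The repair is exactly the paper's weaker claim: every type-II edge has \emph{at least one} endpoint in $L$. This follows by applying your condition-2 argument only from the endpoint \emph{not} in $L$ (that endpoint is either $M$-matched or a fixed point, so the hypothesis of condition~2 does hold there; matched--matched and matched--$F$ then force the partner not to improve, and $F$--$F$ forces non-acceptability, exactly as you argued). Hence the number of type-II edges is at most $|L|$, and the inequality $|I|\le 2|L|+(n-|I|)$ yields $|I|\le |L|+\tfrac{n}{2}\le\tfrac{5n}{6}$. So the approach is sound; only the lemma and the ensuing edge classification need this correction.
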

\begin{proof}
We present an algorithm to construct such a matching. We apply Tan's algorithm to obtain a stable partition. Then, for each odd cycle of length at least three, we remove an arbitrary vertex from the cycle. This leaves us with components that are either even cycles, or paths on an even number of  vertices, or single vertices, as the odd cycles become paths on an even number of  vertices after the removal of a vertex. In all these components except the single vertices, we choose a perfect matching. 

Denote the matching obtained in this way by~$M$. Clearly, $M$ has at most $\frac{n}{2} $ edges. We claim that $M$ is $\left( \frac{5}{6}n+1 \right)$-stable. Let $M'$ be any matching. Observe that if an edge $e\in M'$ has the property that both of its end vertices prefer it to $M$, then $e$ must be adjacent to one of the deleted vertices. This is because otherwise both end vertices of $e$ would prefer $e$ to their worst partner (if there is any) in the stable partition, but this is a contradiction to it being a stable partition in the first place. Also, observe that the number of deleted vertices $x$ just the number of odd cycles, which is at most $\frac{n}{3}$. Combining these, we get that the number of agents who prefer $M'$ to $M$ is at most $2x +(\frac{n}{2}-x)=\frac{n}{2} +x\le \frac{5}{6}n$.

Hence, $M$ is $\left(\frac{5}{6}n+1\right)$-stable, which proves our statement.
\end{proof}
As \ksr, \ksmi, and \ksm\ are subcases of \ksri\, the existence of a $cn$-stable matching follows from Theorem~\ref{th:ksrpol}. In the marriage model, even majority stable matchings are guaranteed to exist.

\begin{theorem}
    In the marriage model, a majority stable matching exists and can be found in $O(m)$ time. Thus, for any $c> \frac{1}{2}$, a $cn$-stable matching exists and can be found in $O(m)$ time.
    \label{th:smpol}
\end{theorem}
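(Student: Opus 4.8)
The plan is not to search but to hand over one particular matching that is provably majority stable, namely a stable matching, and then let two facts already recorded in the paper do the work: by \citet{GS62} a stable matching exists in the marriage model, and by \citet{Gar75} every stable matching there is popular; since popularity implies majority stability (established in Section~\ref{se:preliminaries}), such a matching is majority stable.

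First I would run the Gale--Shapley deferred-acceptance procedure on the instance to obtain a stable matching $M$. In the usual pointer-based implementation every agent proposes down its preference list at most once, so the running time is $O(m)$, where $m$ is the number of acceptable pairs; this already matches the claimed bound. I would then invoke \citet{Gar75} to conclude that $M$ is popular. Finally I would spell out the easy implication popular $\Rightarrow$ majority stable: if some $M'$ were strictly preferred to $M$ by at least $\frac{n+1}{2}$ agents, then at most $\frac{n-1}{2}$ agents could prefer $M$ to $M'$, so $M'$ would be strictly more popular than $M$, contradicting popularity. Hence $M$ is majority stable. One could instead prove popularity of $M$ from scratch, by decomposing the symmetric difference $M\triangle M'$ into alternating paths and cycles and using that $M$ admits no blocking pair, but citing Gärdenfors keeps the argument short.

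For the ``thus'' clause I would use that majority stability is exactly $\frac{n+1}{2}$-stability, together with the monotonicity noted earlier that $k$-stability implies $k'$-stability for every $k'\ge k$. Since $c>\frac12$ yields $cn\ge\frac{n+1}{2}$ once $n\ge\frac{1}{2c-1}$, the very same matching $M$ is $cn$-stable and is produced within the same $O(m)$ time. The only delicate point is this rounding comparison between $cn$ and $\frac{n+1}{2}$ for the smallest values of $n$, which can be checked directly; everything substantive is carried by the two cited results, so I expect no real obstacle beyond assembling them correctly.
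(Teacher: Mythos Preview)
Your approach is correct and starts from the same object as the paper---a Gale--Shapley stable matching computed in $O(m)$ time---but establishes majority stability differently. You invoke the chain stable $\Rightarrow$ popular (\citealt{Gar75}) $\Rightarrow$ majority stable, whereas the paper gives a direct two-line pigeonhole argument: if more than half the agents strictly improve in some $M'$, then two of them must be matched to each other in $M'$ and hence form a blocking pair for $M$, contradicting stability. The paper's route is self-contained and, as it notes, extends without change to instances with ties (break ties arbitrarily, run Gale--Shapley, and the same blocking-pair contradiction applies); your citation to G\"ardenfors is recorded in the paper only for strict preferences, so the tied variants that Table~\ref{ta:results} also attributes to this theorem would need an extra argument in your version. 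Finally, your detour through the inequality $cn\ge\frac{n+1}{2}$ and its small-$n$ caveat is unnecessary: popularity already yields that at most $\tfrac{n}{2}$ agents prefer any $M'$ to $M$, which gives $cn$-stability for every $c>\tfrac12$ directly, without any rounding issue.
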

\begin{proof}
    Even in the presence of ties, a stable matching can be found in $O(m)$ time with the Gale-Shapley aglorithm (by breaking the ties arbitrarily). Such a matching is $cn$-stable for any $c>\frac{1}{2}$. To see this, let $M$ be stable and suppose that there is a matching $M'$ where more than half of the agents improve. Then, there must be two agents, who both improve and are matched to each other to $M'$, as all improving agents must get a partner. But they would be a blocking pair to $M$, contradicting the stability of $M$.
\end{proof}

As majority stable matchings always exist in the marriage model, it is natural to ask whether we can find a majority stable matching that is a maximum size matching as well. We denote the problem of deciding if such a matching exists by \maxmajsmi. In the case of complete preferences, a maximum size and majority stable matching exists and can be found efficiently, as popular matchings are both majority stable and maximum size. Otherwise, the situation is less preferable, as the following theorem shows.

\begin{theorem}
\label{th:npsmsr}
The following problems are NP-complete:
\begin{enumerate}
    \item \maxmajsmi\ even if $|U| = |W|$;
    \item  \ksm\ and \ksmi\ for any fixed constant $c \leq \frac{1}{2}$ even if $|U| = |W|$;
    \item  \ksr\ and \ksri\ for any fixed constant $c \leq \frac{2}{3}$.
\end{enumerate}
\end{theorem}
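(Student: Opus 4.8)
The plan is to settle NP membership first and then give reductions from \textsc{x3c} for each part. Membership is immediate: by Theorem~\ref{th:versmsr}, $cn$-stability can be verified in $O(n^3)$ time, which handles \ksm, \ksmi, \ksr and \ksri; for \maxmajsmi\ I additionally compute a maximum matching size in polynomial time and test, again via Theorem~\ref{th:versmsr} with $k=\frac{n+1}{2}$, that the candidate is majority stable. So all five problems are in NP.

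For hardness I would reuse the element/set bookkeeping of the house allocation gadget from the proof of Theorem~\ref{th:khai}, but now turn the former objects into genuine agents on the opposite side. For the marriage parts every object $p_j$ and $o_i$ becomes a $W$-agent, while $s_j^{\ell}$, $t_j$ and the dummies $d_i^1,d_i^2$ become $U$-agents. The $W$-agents receive short strict lists chosen so that they cannot trigger spurious improvements: $p_j$ ranks $t_j$ first (so $p_j$ is content in the intended matching, where it is always matched to $t_j$), and $o_i$ ranks its set-suitors $s_j^{\ell}$ above the two dummies. From an exact cover one builds the intended matching exactly as in Theorem~\ref{th:khai}: cover-sets match their $s_j^{\ell}$ to $o_{j_\ell}$, every $t_j$ takes $p_j$, and the remaining $U$-agents stay unmatched. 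I would then count, over an arbitrary alternative $M'$, the agents who strictly improve; by the max-weight improvement-matching characterisation underlying Theorem~\ref{th:versmsr} this is a single optimisation, and the preference design forces its optimum below the target threshold precisely when the selected sets form an exact cover.

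The key calibration is that this count is capped at $\frac{n}{2}$: by the argument of Theorem~\ref{th:smpol} any stable matching already blocks every majority-improving alternative, so no marriage reduction can beat $c=\frac{1}{2}$, and the gadget is tuned so that the X3C-dependent matching lands just below $\frac{n}{2}$ exactly when the cover is exact. To realise every fixed $c\le\frac{1}{2}$ I would scale as in the proof of Theorem~\ref{th:khai_allk}: append many mutually top-ranked pairs (each contributing no improving agent) to push $c$ down, and append balanced two-sided analogues of the tight instance in Example~\ref{ex:cond}, whose every matching admits an alternative improving for just under half of its agents, to push $c$ up toward $\frac{1}{2}$; throughout I pad with dummy matched pairs to enforce $|U|=|W|$. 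Passing from \ksmi\ to the complete variant \ksm\ is routine by the appending trick of Theorem~\ref{th:kha}: extend every list with the remaining agents at the bottom, which leaves the improvement structure—and hence the analysis—unchanged. For \maxmajsmi\ the same instance is reused, but now I argue that a globally maximum-size matching is majority stable if and only if the underlying set choice is an exact cover, the point being that any wasteful (non-exact) maximum-size matching admits an alternative preferred by a strict majority.

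For the roommates parts the threshold rises to $c=\frac{2}{3}$, and here the bipartite core is genuinely too weak: padding a $\frac{1}{2}$-core with extra gadgets cannot lift the separation above $\frac{1}{2}$, so a different construction is needed. Instead I would build the coverage gadgets out of odd (triangle) cycles, exploiting that non-bipartite instances admit odd cycles, which are exactly the structures the stable-partition algorithm of Theorem~\ref{th:ksrpol} is forced to break. In a cyclically oriented triangle two of the three agents can always improve and no $2$-stable configuration exists, so each such gadget contributes its full $\frac{2}{3}$ share of improving agents; wiring these gadgets through the element-agents so that an odd cycle is unavoidable precisely when an element is mis-covered yields hardness up to $c=\frac{2}{3}$, with the same scaling and complete-list tricks as above. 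I expect the main obstacle to be exactly the two-sided (and, for roommates, non-bipartite) voting: unlike the house allocation model, the promoted agents cast their own votes, so the delicate part is choosing their preference lists and accounting for their contribution so that the improvement count crosses the $\frac{1}{2}$ (respectively $\frac{2}{3}$) threshold if and only if no exact cover exists.
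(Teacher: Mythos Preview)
Your proposal is a plan rather than a proof, and the central step---``turn the house-allocation objects into $W$-agents and give them short lists''---does not go through as stated. Once $p_j$ and $o_i$ vote, the tight $5\hat n$/$5\hat n{+}1$ count from Theorem~\ref{th:khai} collapses: each $o_i$ is matched to \emph{some} $s_j^\ell$ in the intended cover matching, and with your ``set-suitors above dummies'' rule nothing prevents $o_i$ from preferring a different $s_{j'}^{\ell'}$, which injects up to $3\hat n$ extra improving agents you never account for. You correctly flag this as ``the delicate part'' but do not resolve it. The paper does not salvage the house-allocation gadget at all; it builds a genuinely two-sided, mirror-symmetric construction on $60\hat n$ agents (with auxiliary agents $a_i,b_i,d_i,e_i,f_i,g_i,\dots$ and their primed partners on the opposite side) and proves a $(16\hat n{+}1)$-stability threshold via a separate counting argument in each direction. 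That machinery is what your outline is missing.

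Your roommates paragraph contains a concrete error. You assert that ``the bipartite core is genuinely too weak: padding a $\tfrac{1}{2}$-core with extra gadgets cannot lift the separation above $\tfrac{1}{2}$'', and therefore propose rebuilding the core out of triangles. This is false: padding with gadgets whose improvement fraction is $\alpha$ drives the overall ratio toward $\alpha$ regardless of the core's own ratio, since $(k_0+\alpha m)/(n_0+m)\to\alpha$ as $m\to\infty$. The paper exploits exactly this: it keeps the \emph{same} bipartite marriage core (ratio $16/60=4/15$) and pads with cyclically oriented triangles to reach every $c\le \tfrac{2}{3}$. So no new non-bipartite core is needed; only the padding is non-bipartite. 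Similarly, for \maxmajsmi\ the paper does not argue directly about arbitrary maximum matchings but appends a long even path whose unique perfect matching contributes a calibrated number of improving agents, turning the $(16\hat n{+}1)$-stability question into a majority-stability question on a maximum matching.
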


\begin{proof}
By Theorem~\ref{th:versmsr}, all decision problems on the existence of a $cn$-stable matching in the marriage and roommates models are in NP. We again reduce from \textsc{x3c}. Let $I$ be an instance of \textsc{x3c} and let $\mathcal{S}=\{ S_1,\dots,S_{3\hat{n}}\}$. First, we create an instance $I'$ of \textsc{sm}, which admits a $(16\hat{n}+1)$-stable matching if and only if it admits a maximum size $(16\hat{n}+1)$-stable matching if and only if $I$ admits an exact 3-cover. This instance will be the basis of all three reductions.

Let us denote the two classes of the agents we create in $I'$ by $U$ and $W$.
For each set $S_j\in S$ we create five agents $s_j^1,s_j^2,s_j^3, y_j,q_j$ in $U$, and five agents $c_j^1,c_j^2,c_j^3, x_j,p_j$ in~$W$. For each element $i\in \mathcal{X}$ we create five agents $b_i, d_i,e_i,f_i',g_i'$ in $U$ and five agents $a_i, d_i',e_i',f_i,g_i$ in~$W$. Altogether we have $n=60\hat{n}$ agents.

The preferences are described in Figure~\ref{fig:ksmi}. 
Let $S_j=\{ j_1,j_2,j_3\}$ and for an element $i$, let $S_{i_1},S_{i_2},S_{i_3}$ be the three sets that contain $i$ with $\ell_1,\ell_2,\ell_3$ denoting the position of $i$ in the sets $S_{i_1},S_{i_2},S_{i_3}$, respectively. 

\begin{figure}
    \centering
    
    \begin{tabular}{rlrl}
    $s_j^{\ell}:$ &  $a_{j_{\ell}}\succ p_j \succ c_j^{\ell}$ & $c_j^{\ell}:$ & $b_{j_{\ell}}\succ q_j\succ s_j^{\ell}$  \\
     $q_j:$ & $x_j\succ c_j^1\succ c_j^2\succ c_j^3$ & $p_j:$ & $y_j\succ s_j^1\succ s_j^2\succ s_j^3$ \\
    $y_j:$ & $p_j$  & $x_j:$ & $q_j$ \\
    $b_i:$ & $f_i\succ c_{i_1}^{\ell_1}\succ c_{i_2}^{\ell_2}\succ c_{i_3}^{\ell_3}\succ g_i$ \qquad \qquad \qquad & $a_i:$ & $d_i \succ s_{i_1}^{\ell_1}\succ s_{i_2}^{\ell_2}\succ s_{i_3}^{\ell_3}\succ e_i$ \\
    $d_i:$ & $a_i\succ d_i'$ & $f_i:$ & $b_i\succ f_i'$ \\
    $e_i:$ & $a_i\succ e_i'$ & $g_i:$ & $b_i\succ g_i'$ \\
    $f_i':$ & $f_i$ & $d_i':$ & $d_i$ \\
    $g_i':$ & $g_i$ & $e_i':$ & $e_i$ \\
\end{tabular}\vspace{3mm}
    \includegraphics[height=0.75\textheight]{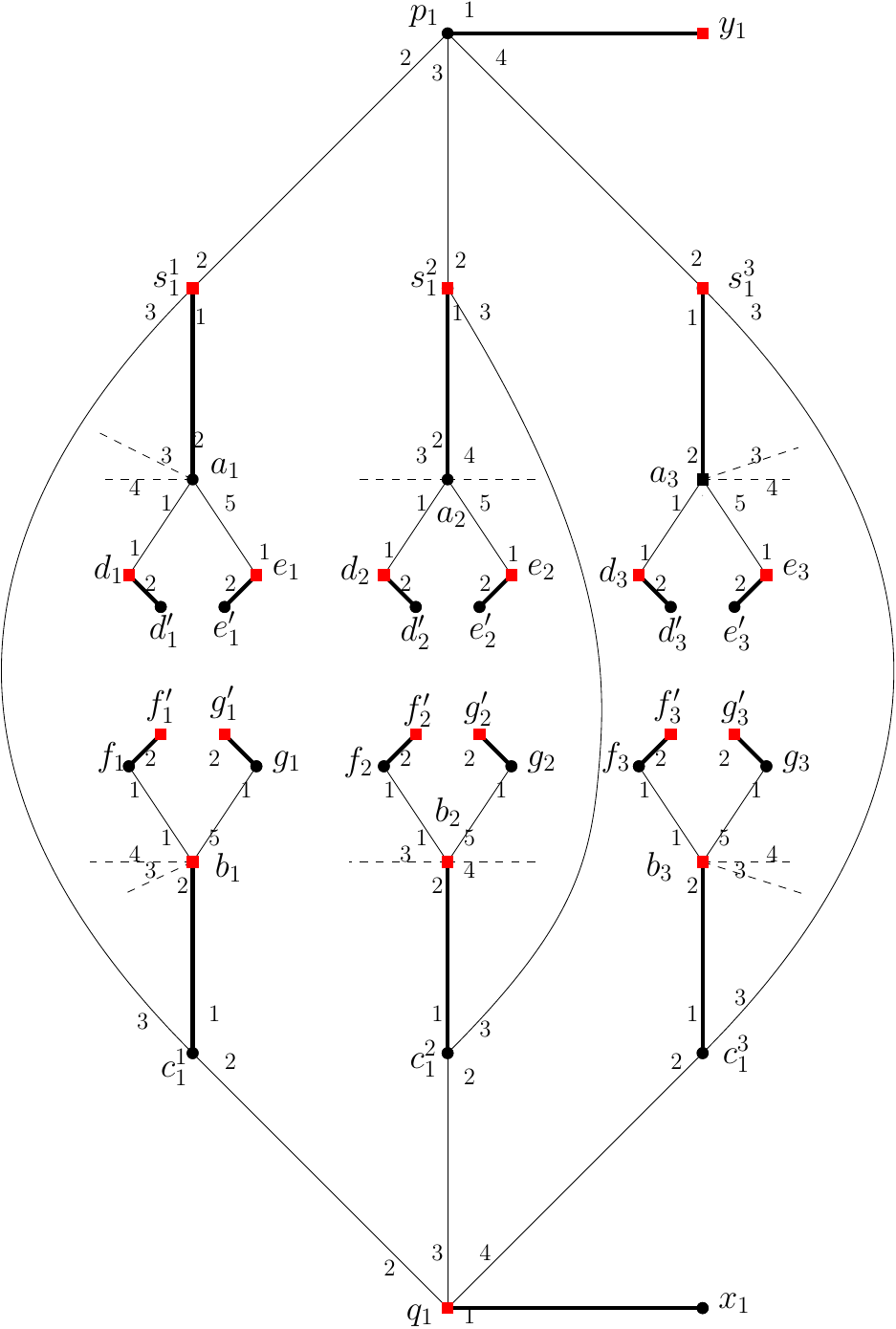}
    \caption{The construction for Theorem~\ref{th:npsmsr}. The preference list of each created agent, $i,j\in [3\hat{n}]$, $\ell \in [3]$, can be seen above the graph. The figure illustrates the gadget of a set $S_1=\{ 1,2,3\}$. The numbers on the edges indicate the preferences. Thick edges denote the matching edges if $S_1$ is in the exact-3-cover, dashed edges run to other gadgets.}
    \label{fig:ksmi}
\end{figure}

\begin{claim}
If $I$ admits an exact 3-cover, then $I'$ admits a maximum size $(16\hat{n}+1)$-stable matching.
\end{claim}
\begin{claimproof}
To the exact 3-cover $S_{l^1},\dots, S_{l^{\hat{n}}}$ in $I$ we create a matching $M$ in $I'$ as follows---see Figure~\ref{fig:ksmi} as well.
\begin{itemize}
    \item For each $i,j\in [3\hat{n}]$, we add the edges $(y_j,p_j),(d_i,d_i'),(e_i,e_i'),(q_j,x_j),(f_i',f_i),(g_i',g_i)$. 
    \item For each $j\in \{ l^1,\dots,l^{\hat{n}}\}$ and $\ell \in [3]$, we add the edges $(s_j^{\ell},a_{j_{\ell}}),(b_{j_{\ell}},c_j^{\ell})$.
    \item For each $j\notin \{ l^1,\dots,l^{\hat{n}}\}$ and $\ell \in [3]$, we add the edge $(s_j^{\ell},c_j^{\ell})$. 
\end{itemize}
As the sets formed an exact cover, $M$ is a matching, and because all agents are matched, $M$ is a maximum size matching in~$I'$. We next show that in any matching $M'$, at most $16\hat{n}$ agents can improve.
\begin{itemize}
    \item Agents of type $s_j^{\ell},c_j^{\ell}$ can improve with an agent of type $p_j$ or $q_j$, respectively, if $(s_j^{\ell},c_j^{\ell}) \in M$. As $S_{l^1},\dots, S_{l^{\hat{n}}}$ was an exact cover, at most $2\hat{n}$ of the $s_j^{\ell}$ agents can improve with an agent of type $p_j$---only those with $j\notin \{ l^1,\dots ,l^{\hat{n}}\}$---and similarly, at most $2\hat{n}$ of the $c_j^{\ell}$ agents can improve with an agent of type~$q_j$.
    \item Agents of type $a_i$ or $b_i$ can improve.
    \item All other agents can only improve by switching to an agent of type $a_i$ or~$b_i$. Even if all agents of type  $a_i,b_i$ improve and they get a partner who also improves with them, $2 \cdot 2 \cdot 3 \hat{n} = 12\hat{n}$ agents can improve as or through agents of type $a_i$ or~$b_i$.
\end{itemize}
Therefore, altogether at most $4\hat{n} + 12\hat{n}=16\hat{n}$ agents can improve, concluding the proof of our claim.
\end{claimproof}

\begin{claim}
If $I'$ admits a maximum size $(16\hat{n}+1)$-stable matching, then $I'$ admits a $(16\hat{n}+1)$-stable matching.
\end{claim}

\begin{claim}
If $I'$ admits a $(16\hat{n}+1)$-stable matching, then $I$ admits an exact 3-cover.
\end{claim}
\begin{claimproof}
Suppose that $I'$ admits a $(16\hat{n}+1)$-stable matching $M$, but there is no exact 3-cover in~$I$. We count the number of agents who can improve.
\begin{itemize}
    \item For any $i$, if $a_i$ is not matched to $d_i$ in $M$, then both $a_i$ and $d_i$ can improve if they get matched together, and otherwise both $d_i'$ and $e_i$ can improve with the edges $(d_i,d_i'),(e_i,a_i)$. Therefore, in any $M$, at least $6\hat{n}$ agents from the set $\{ d_i',d_i,e_i,a_i \mid i\in [3\hat{n}] \}$ can improve among themselves and similarly, at least $6\hat{n}$ agents from $\{ f_i',f_i,g_i,b_i \mid i\in [3\hat{n}] \}$ can improve among themselves.
    \item As no exact 3-cover exists in~$I$, there are at least $2\hat{n}+1$ indices $j\in [3\hat{n}]$, such that at least one of $\{ s_j^1,s_j^2,s_j^3\}$ is not matched to an agent of type $a_i$ in $M$ 
    and similarly, there are at least $2\hat{n}+1$ indices $j'\in [3\hat{n}]$, such that at least one of $\{ c_{j'}^1,c_{j'}^2,c_{j'}^3\}$ is not matched to an agent of type $b_i$ in~$M$. For each such $j$ and $j'$, at least one of the agents in the set $\{ s_j^1,s_j^2,s_j^3,y_j,p_j\}$ and at least one of the agents in $\{ c_{j'}^1,c_{j'}^2,c_{j'}^3,x_{j'},q_{j'}\}$ can improve among themselves: if $(y_j,p_j)\notin M$ or $(q_{j'},x_{j'})\notin M$, then $y_j$ and $p_j$ or $x_{j'}$ and $q_{j'}$ can both improve with each other, otherwise there is an $s_j^{\ell}$ agent with a $c_j^{\ell}$ agent or a $c_{j'}^{\ell}$ agent with an $s_{j'}^{\ell}$ agent, who could improve with $p_j$ or $q_{j'}$, respectively.
\end{itemize}
 Altogether at least $12\hat{n}+4\hat{n}+2=16\hat{n}+2$ agents can improve, a contradiction to the $(16\hat{n}+1)$-stability of~$M$. 
\end{claimproof}

Now we use this construction to prove the hardness of all three problems. 

\begin{enumerate}
    \item For \maxmajsmi, we add a path $P$ with $28\hat{n}+2$ vertices to $I'$, to have $88\hat{n}+2$ agents. Path $P$ has an even number of vertices and therefore a unique maximum size matching~$M_P$. However, we create the preferences of the agents such that they all prefer their other edge in $P$, except for the end vertices. Hence, by switching to the edges in $P$ outside of $M_P$, $28\hat{n}$ agents improve. So in this instance, there is a maximum size majority stable matching, and therefore a $(44\hat{n}+2)$-stable matching if and only if $I'$ admits a matching, where at most $16\hat{n}+1$ agents can improve, which happens if and only if $I$ admits an exact 3-cover, as we have seen in the main part of the proof.

    \item We distinguish two cases for \ksm. First let $\frac{4}{15}\le c \leq  \frac{1}{2}$. In this case, we add to $I'$ some paths on four vertices, such that the vertices with degree two prefer the middle edge. Hence, for any matching $M$, in any of these paths, at least two agents can improve. By adding enough of these paths, we can get an instance that has a $cn$-stable matching if and only if there is a $(16\hat{n}+1)$-stable matching in~$I'$.
    
    For $c<\frac{4}{15}$, we add a sufficiently large instance that admits a 1-stable matching---a union of edges suffices.

    \item We apply a similar case distinction for the last statement on \ksr. First let $\frac{4}{15}\le c \leq  \frac{2}{3}$. Now we add to $I'$ some (but an even number of) triangles with cyclic preferences. For any matching $M$, in any such triangle at least two out of the three agents can improve. This still holds if we add edges between the vertices of the first two, the vertices of the second two, \ldots, the vertices of the last two triangles, which are worse for both sides. This leads to an instance in which there is a complete $cn$-stable matching if there is any $cn$-stable matching. Hence, by adding sufficiently many of these triangles, we can get an instance that has a $cn$-stable matching if and only if there is a $(16\hat{n}+1)$-stable matching in~$I'$.
    
    For $c<\frac{4}{15}$, we add a sufficiently large instance that admits a 1-stable matching.
\end{enumerate}
The second and the third reductions remain intact for complete preferences, when we add the remaining agents to the end of the preference lists. It is easy to see that if there is a $cn$-stable matching $M'$ with the extended preferences, then there is one where each agent obtains an original partner---we just project $M'$ to the original acceptability graph to get~$M$. If there is a matching $M''$, where at least $k$ agents improve from $M$, then these $k$ agents must also prefer $M''$ to $M'$, which contradicts the $cn$-stability of~$M'$.

In the other direction, we have by our reduction that if there is a $cn$-stable matching $M$ with incomplete preferences, then there is one that is complete as well. Hence, agents can only improve with original edges, so if a matching $M''$ would be better for at least $cn$ agents than $M$ with complete preferences, then so would it be with incomplete preferences as well, which is a contradiction.
\end{proof}

We now turn to the preference-restricted cases \textsc{dc} and \textsc{sti}.
\begin{theorem}
\label{thm:alltiesrpos}
    If each preference list consists of a single tie, then a $cn$-stable matching can be found $O(\sqrt{n}m)$ time for any $c>\frac{1}{3}$.
\end{theorem}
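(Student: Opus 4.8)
The plan is to show that any maximum-size matching $M$ in the acceptability graph is already $cn$-stable for every $c>\frac{1}{3}$. The crucial structural observation is that, because every agent's preference list is a single tie, an agent $i$ strictly prefers $M'$ to $M$ if and only if $i$ is unmatched in $M$ but matched (to some acceptable partner) in $M'$: switching between two acceptable partners leaves $i$ indifferent, and losing one's partner is never an improvement. Hence, for any matching $M'$, the set of agents who improve is exactly $U(M)\cap V(M')$, where $U(M)$ denotes the agents left unmatched by $M$ and $V(M')$ the agents matched by $M'$. It therefore suffices to bound $|U(M)\cap V(M')|$ for an arbitrary competitor $M'$.

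First I would bound this quantity by $|M|$ using a standard alternating-path argument. Consider the symmetric difference $M\triangle M'$, which decomposes into vertex-disjoint alternating paths and even cycles. No agent of $U(M)$ lies on a cycle, since every cycle vertex is matched by both matchings, so each improving agent is an endpoint of some path. A path cannot have both endpoints in $U(M)$: such a path would start and end with an $M'$-edge and hence be $M$-augmenting, contradicting the maximality of $M$. Consequently each path accounts for at most one improving agent, and every such path contains at least one $M$-edge (a lone $M'$-edge joining two $U(M)$-vertices would again be augmenting). As the paths are edge-disjoint, the number of improving agents is at most $|M|$.

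Combining this with the trivial bound $|U(M)\cap V(M')|\le |U(M)| = n-2|M|$ shows that at most $\min\{|M|,\,n-2|M|\}$ agents can improve. Since $\min\{t,\,n-2t\}\le \frac{n}{3}$ for every integer $t$, no matching $M'$ makes as many as $\frac{n}{3}$ agents better off, so $M$ is $\left(\frac{n}{3}+1\right)$-stable and thus $cn$-stable for all $c>\frac{1}{3}$. Finally, a maximum-size matching can be computed in $O(\sqrt{n}m)$ time in both the marriage (bipartite) and roommates (general) graphs, which yields the claimed running time. I expect the only delicate point to be the alternating-path bookkeeping giving the bound $|M|$; once the reduction of \emph{improvement} to \emph{unmatched-becomes-matched} is in place, the rest is the elementary estimate $\min\{t,\,n-2t\}\le \frac{n}{3}$.
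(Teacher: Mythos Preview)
Your proof is correct and arrives at the same conclusion as the paper---any maximum matching is $\bigl(\tfrac{n}{3}+1\bigr)$-stable---but via a genuinely different and more elementary route. The paper invokes the Gallai--Edmonds decomposition: it takes the Gallai--Edmonds set $X$, notes that the $q(X)-|X|$ unmatched vertices lie in distinct odd components of $G\setminus X$, and then does a case analysis on singleton versus non-singleton odd components to derive a vertex-count contradiction. You bypass this structure theorem entirely by working directly with the symmetric difference $M\triangle M'$: each improving agent is an endpoint of an alternating path, no path has two $U(M)$-endpoints (else $M$-augmenting), and each such path carries an $M$-edge, giving the bound $|M|$ on the number of improvers; combined with the trivial bound $n-2|M|$ you get $\min\{t,n-2t\}\le n/3$.

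What each approach buys: yours is shorter, self-contained, and works uniformly for bipartite and non-bipartite graphs without appealing to factor-critical components; the paper's Gallai--Edmonds argument, while heavier, makes the extremal structure (disjoint odd components saturated against a small barrier) more visible and would generalize more readily if one wanted finer information about \emph{which} agents can improve. One small wording slip: you write ``no matching $M'$ makes as many as $\frac{n}{3}$ agents better off'', but your inequality only gives $\le n/3$ (and this is tight, e.g.\ on a union of triangles); the correct statement, which you then use, is that no $M'$ makes \emph{more than} $n/3$ agents better off, i.e.\ $M$ is $\bigl(\tfrac{n}{3}+1\bigr)$-stable.
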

\begin{proof}
Let $M$ be a maximum size matching in the acceptability graph $G=(N,E)$, which can be found in $O(\sqrt{n}m)$ time \citep{MV80}. We claim that $M$ is $\left(\frac{n}{3}+1\right)$-stable. Suppose there is a matching $M'$, where at least $\frac{n}{3}+1$ agents improve. As the preferences consist of a single tie, this is only possible if all of them were unmatched in $M$ and they can be matched simultaneously. By the famous Gallai-Edmonds decomposition, we know that there is a set $X\subset N$ such that:
\begin{itemize}
    \item each maximum size matching matches every vertex of $X$;
    \item every vertex in $X$ is matched to a vertex in distinct odd components in $G\setminus X$;
    \item $G$ has exactly $q(X)-|X|$ unmatched vertices, each of which are in distinct odd components in $G\setminus X$, where $q(X)$ denotes the number of odd components in $G\setminus X$.
\end{itemize}

As $\frac{n}{3}+1$ agents can improve, we get that $q(X)-|X|>\frac{n}{3}$. Let $s$ denote the number of singleton components in $G\setminus X$. If such a vertex was unmatched in $M$, then it can only improve by getting matched to someone in~$X$. Let $x\le s$ be the number of agents in singleton components who improve in~$M'$. Then, $|X|\ge x$ by our observation. Also, at least $\frac{n}{3}-x+1$ agents must improve from the other odd components, which all have size at least 3. Furthermore, there must be at least $x$ odd components, whose vertices are all matched by $M$ according to the Gallai-Edmonds characterization. Hence, we get that the number of vertices of $G$ is at least $x +3 \cdot (\frac{n}{3}-x+1)+2\cdot x\ge n+3$, which is a contradiction.
\end{proof}

\begin{theorem}
\label{thm:dichcompsrpos}
    If the preferences are complete and dichotomous, then a $cn$-stable matching exists and can be found in $O(n^3)$ time for any $c>\frac{1}{2}$.
\end{theorem}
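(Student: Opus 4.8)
The plan is to build a (near\nobreakdash-)perfect matching $M$ and show that it is $\left(\frac{n}{2}+1\right)$\nobreakdash-stable; since $\frac{n}{2}+1\le cn$ for every fixed $c>\frac12$ once $n$ is large, this gives the stated $cn$\nobreakdash-stability (and the decision problem is trivially in $\P$ on the remaining bounded instances). The starting point is a one\nobreakdash-sided analogue of the argument behind Theorem~\ref{th:smpol}. In any alternative matching $M'$, every agent who improves must be matched in $M'$, because being unmatched is the worst outcome. Hence if the improving set $S$ satisfies $|S|>\frac{n}{2}$, then in $M'$ the agents of $S$ cannot all be matched to distinct agents outside $S$ (that would force $|S|\le n-|S|$), so some edge of $M'$ lies entirely inside $S$. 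Its two endpoints both strictly prefer each other to their $M$\nobreakdash-partners, i.e.\ they form a blocking pair of $M$. Thus it suffices to construct a (near\nobreakdash-)perfect $M$ whose only possible blocking pair is incident to an unmatched vertex.

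To control blocking pairs I would exploit the dichotomy of the preferences. Let $G^+$ be the graph on $N$ whose edges are the \emph{mutually good} pairs, i.e.\ those $\{i,j\}$ for which $j$ is good for $i$ and $i$ is good for $j$. First compute a maximum matching $N^+$ in $G^+$ (in general graphs this costs $O(n^3)$), and then extend it \emph{arbitrarily} to a matching $M$ covering all but at most one vertex of $N$; this extension is always possible precisely because the preferences are complete. Every agent matched inside $N^+$ lies on a mutually good edge, so it receives a good partner in $M$; call such agents \emph{happy}, and the rest (matched to a bad partner, or unmatched) \emph{unhappy}. Since happy agents are exactly those covered by $N^+$, every unhappy agent is exposed by $N^+$, and by maximality of $N^+$ the exposed vertices form an independent set of $G^+$: no two unhappy agents are mutually good.

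Now I would bound the improving set $S$ of an arbitrary $M'$. As happy agents sit at the top level and cannot improve, $S$ consists only of unhappy agents. Suppose two of them, $i$ and $j$, are both matched in $M$ and matched to each other in $M'$. Then each moves from a bad partner to a good one, so $i$ and $j$ are mutually good, contradicting the independence just established. Therefore the only edge of $M'$ that can lie inside $S$ must be incident to the at most one vertex left unmatched by $M$, which exists only when $n$ is odd; writing $t$ for the number of $M'$\nobreakdash-edges inside $S$, we conclude $t\le 1$. The counting bound then yields $|S|\le \frac{n}{2}+t$, and since $|S|$ is an integer this gives $|S|\le\lceil n/2\rceil<\frac{n}{2}+1$ in both parities, i.e.\ $M$ is $\left(\frac{n}{2}+1\right)$\nobreakdash-stable. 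The crux of the whole argument is exactly this parity point: for even $n$ the matching $M$ is perfect and admits no blocking pair at all, whereas for odd $n$ one must verify that the single exposed vertex contributes at most one extra improving agent, so that the bound stays at $\lceil n/2\rceil$ and does not drift upward (which is what distinguishes this $\frac12$\nobreakdash-bound from the weaker $\frac56$\nobreakdash-bound of Theorem~\ref{th:ksrpol}). Computing $N^+$ dominates the running time at $O(n^3)$, the extension being linear.
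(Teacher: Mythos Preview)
Your proof is correct and reaches the same $\lceil n/2\rceil$ bound, but it is \emph{not} the paper's argument. The paper assigns each edge the weight $w(e)=$ ``number of endpoints for which $e$ is a first-choice edge'', takes a maximum-weight matching, extends it to a (near-)perfect matching, and then argues directly: if more than $n/2$ agents improved in some $M'$, each of them would receive a first-choice partner in $M'$ while having none in $M$, so $w(M')>n/2>w(M)$, contradicting maximality. Your route instead restricts to the \emph{mutually} good subgraph $G^+$, takes a maximum \emph{cardinality} matching there, and exploits the elementary fact that the exposed vertices of a maximum matching are independent; this independence is what forbids an $M'$-edge with both endpoints in $S$ (away from the single possibly-unmatched vertex).

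What this buys: your argument is purely combinatorial and uses only unweighted matching, so in principle it runs in $O(\sqrt{n}\,m)$ rather than the $O(n^3)$ you quote; the paper's weight argument is shorter but needs a weighted-matching subroutine. One cosmetic point: your parenthetical ``the rest (matched to a bad partner, or unmatched)'' is not literally true---an agent exposed by $N^+$ may still receive a good partner in the arbitrary extension---but this does not affect the proof, since such an agent simply cannot improve and hence never enters $S$. Both your proof and the paper's share the same harmless looseness at odd $n$ (the bound $|S|\le\lceil n/2\rceil$ can be tight, so the existence claim should really be read asymptotically in $n$ for fixed $c>\tfrac12$).
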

\begin{proof}
  We define the edge weight function $w(e)$ as the number of end vertices of $e$ who rank $e$ best. For a matching $M$, $w(M)$ is equal to the number of agents who get a first choice. Let $M$ be a maximum weight matching with respect to~$w(e)$. Then we extend $M$ to a maximum size matching, which matches all agents as the preferences are complete. 
Suppose there is a matching $M'$, where more than $\frac{n}{2}$ agents improve. As agents can only improve by getting a first-choice partner, we get that $w(M')>\frac{n}{2}>w(M)$, contradiction. 
\end{proof}

\begin{theorem}
\label{thm:alltiesrneg}
 $c$-\textsc{smi} is NP-complete for any fixed constant $c\le \frac{1}{3}$ even if each preference list is a single tie.
\end{theorem}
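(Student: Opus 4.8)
The plan is to reduce from \textsc{x3c}, as in the earlier hardness results, while exploiting the very rigid improvement relation imposed by single ties. Membership in NP is immediate from Theorem~\ref{th:versmsr}. The crucial simplification is that when every preference list is a single tie, an agent strictly prefers $M'$ to $M$ \emph{exactly} when it is unmatched in $M$ and matched in $M'$ (all acceptable partners are tied, so no agent improves by swapping one partner for another). Consequently, writing $U_M$ for the set of agents left unmatched by $M$, matching $M$ is $cn$-stable if and only if no single matching covers at least $cn$ vertices of $U_M$. Since we build a bipartite (marriage) instance, the maximum number of vertices of $U_M$ coverable by one matching is controlled, via the deficiency form of Hall's/K\H{o}nig's theorem, by the most deficient subset $A\subseteq U_M$, i.e.\ by a subset with $|N(A)|$ much smaller than $|A|$. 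Thus making $M$ $cn$-stable amounts to exposing vertices that are \emph{crammed against a small shared neighborhood} rather than spread against private, still-free partners.

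First I would build, for each element and each set of the \textsc{x3c} instance, gadgets linking set-connector vertices to element vertices and enforcing an \emph{all-or-nothing} selection: a set can cover its three element vertices ``for free'' only if it is fully selected, and because element vertices are shared, selected sets are forced to be pairwise disjoint, exactly like an exact cover. The construction would be calibrated so that (i) if an exact $3$-cover exists, there is a matching whose exposed vertices all lie in bottleneck positions sharing a small common neighborhood, so their maximum simultaneous coverage is strictly below the threshold; and (ii) if no exact cover exists, then in every matching a constant fraction $c_0\le\tfrac13$ of the agents are exposed with private available partners — the situation degenerates to a disjoint union of cherries $P_3$, the bipartite analogue of the all-triangles example underlying the tight bound of Theorem~\ref{thm:alltiesrpos} — so some matching covers at least $cn$ of them.

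With the base gadget giving hardness at a fixed ratio $c_0\le\tfrac13$, I would extend to every fixed $c\le\tfrac13$ by the padding technique used throughout the paper (cf.\ Theorems~\ref{th:khai_allk} and~\ref{th:npsmsr}). Disjoint edges contribute vertices that are always perfectly matchable, hence add no rescuable exposed vertex, while each cherry $P_3$ unavoidably contributes exactly one rescuable exposed vertex per three vertices. Since a cherry raises $cn$ by $3c\le 1$ while raising the forced coverage count by exactly $1$, mixing an appropriate number of cherries and edges slides the threshold $\lceil cn\rceil$ into the unit-width gap between the YES-value and the NO-value of the maximum coverage, for any fixed $c\le\tfrac13$; this reproduces the $\tfrac13$ boundary matching the positive result of Theorem~\ref{thm:alltiesrpos}. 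Because \textsc{sm} is a special case of \textsc{sr}, the same instances simultaneously establish the roommates entry of Table~\ref{ta:results2}.

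The main obstacle is the all-or-nothing set gadget. Single ties exert no preference pressure, so the ``cost'' of a partial selection cannot be forced through the agents' rankings and must instead be encoded purely in the number of exposed, rematchable vertices it creates — while, at the same time, the honest exact-cover matching must keep its exposed set confined to a low-deficiency bottleneck. Achieving the YES-side confinement and the NO-side spread simultaneously, and pinning the exact vertex counts so that the separating threshold equals $\lceil cn\rceil$, is the delicate part; once the gadget is correct, both coverage bounds follow from a routine Hall/K\H{o}nig deficiency computation on $U_M$.
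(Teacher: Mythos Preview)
Your proposal has the right peripheral ideas but a genuine gap at its core: you never construct the gadget. You correctly note that under single-tie lists an agent improves precisely when it goes from unmatched to matched, and your padding strategy with cherries $P_3$ (one forced rescuable exposed vertex per three agents) and disjoint edges (no rescuable vertex) is exactly what the paper uses. But the actual reduction --- the ``all-or-nothing'' set gadget whose exposed vertices are confined behind a small bottleneck in the YES case and spread out with private partners in the NO case --- is left entirely as a specification of desired behaviour. You explicitly call this ``the main obstacle'' and ``the delicate part'', and then stop. A proposal that names the hard step and defers it is not a proof. (As a side remark, your Hall/K\H{o}nig deficiency formula for the number of coverable vertices of $U_M$ presumes $U_M$ lies on one side of the bipartition; in a marriage instance unmatched agents can sit on both sides, so even the analytic tool you intend to apply would need more care.)

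More importantly, the paper sidesteps this difficulty completely: no new gadget is needed. It simply reuses the house-allocation single-tie reduction of Theorem~\ref{th:ksmallties}, reinterpreting the objects there as agents whose (single-tie) preference list is their original acceptability set. The crucial one-line observation is that in that construction every inclusion-wise maximal matching already saturates the object side; once objects become agents they are therefore always matched and can never improve, so the two-sided instance admits a $(5\hat n+1)$-stable matching if and only if the one-sided instance does. The base ratio thus obtained is then pushed up to any fixed $c\le\tfrac13$ by adding $P_3$'s and down by adding disjoint edges --- precisely your padding moves. In short, the ``delicate'' gadget you are searching for is already sitting in Theorem~\ref{th:ksmallties}; the only idea you are missing is that saturating the former object side renders its newly-acquired preferences irrelevant.
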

\begin{proof}
We start with the reduction in the proof of Theorem~\ref{th:ksmallties}. There, one side of the graph consisted of objects, which now correspond to agents. To ensure that each agent's list is a single tie, all acceptable agents are tied in the preference lists of these new agents.

In the proof of Theorem~\ref{th:ksmallties} we showed that in the base instance with $18\hat{n}$ agents and $3\hat{n}$ objects it is NP-complete to decide if there exists a $\left(5\hat{n}+1\right)$-stable matching. 
As the reduction had the property that any inclusion-wise maximal matching assigned all objects, it follows that in our new instance with two-sided preferences there is a $\left(5\hat{n}+1\right)$-stable matching if and only if there was one with one-sided preferences, as we can always suppose that the agents on the smaller side cannot improve.

To extend hardness to $c\le \frac{5}{21}$, we just add a sufficiently large instance that admits a 1-stable matching. To extend hardness to any $\frac{5}{21}<c\le \frac{1}{3}$, we add paths with 3 vertices to the instance. As in each such path, there is always an unmatched agent, at least one third of these agents can improve, hence by adding sufficiently many copies, the theorem follows.
\end{proof}

\begin{theorem}
\label{thm:dichcompsrneg}
    $c$-\textsc{smc} is NP-complete for any fixed constant $c\le \frac{1}{2}$, even if the preferences are dichotomous.
\end{theorem}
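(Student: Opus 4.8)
The plan is to reduce from \textsc{x3c} while keeping all preferences dichotomous and complete, reusing the house allocation hardness of Theorem~\ref{th:khat} as the combinatorial core. Recall that the balanced dichotomous and complete instance $I'$ built there, with $|N|=|O|=18\hat{n}$, admits a $(5\hat{n}+1)$-stable matching if and only if the underlying \textsc{x3c} instance has an exact cover. I would first transform $I'$ into a marriage instance $J$ by promoting every object to an agent of $W$ and keeping the original agents as $U$: each original agent retains its good/bad classification from $I'$, while every former object declares \emph{all} agents bad. Since $|U|=|W|$ and the lists are complete, $J$ has perfect matchings.

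The heart of the base reduction is to show that the least achievable number of simultaneously improving agents is the same in $J$ as in $I'$. I would first observe that we may restrict to perfect matchings: extending any matching to a perfect one only turns ``unmatched'' into ``matched to a bad partner'', which can only raise each agent's threshold for improving, so it preserves $k$-stability. In a perfect matching the former objects can never improve (all their partners are bad), and an original agent improves exactly when it moves from a bad to a good partner; by Observation~\ref{lemma:ks-charac} this is precisely the condition recorded by the improvement graph of the corresponding house allocation matching. Hence perfect matchings of $J$ are in value-preserving correspondence with matchings of $I'$, so $J$ admits a $(5\hat{n}+1)$-stable matching if and only if $I'$ does, giving hardness of \ksm\ at threshold $5\hat{n}+1$ among $36\hat{n}$ agents.

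To reach an arbitrary fixed constant $c\le\frac{1}{2}$ I would scale $J$ with two \emph{rigid} dichotomous and complete gadgets whose contribution to the improvement count is independent of the rest of the matching. To push the ratio upward I would attach copies of a four-agent flip-flop on $a_1,a_2\in U$ and $b_1,b_2\in W$, where the only good pairs are $a_1b_1$, $a_2b_2$ (good for $a_1,a_2$) and $b_1a_2$, $b_2a_1$ (good for $b_1,b_2$), every other edge being bad. Its two internal perfect matchings have the disjoint good-sets $\{a_1,a_2\}$ and $\{b_1,b_2\}$, and one checks that against \emph{every} matching exactly two of the four agents can be turned from bad to good by flipping to the complementary matching; thus each copy adds exactly $2$ to the maximum improvement count while adding $4$ agents, the extremal ratio $\frac{1}{2}$ allowed by Theorem~\ref{thm:dichcompsrpos}. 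To push the ratio downward for small $c$ I would instead attach mutually good pairs, which are matched to each other and never improve. Because all inter-gadget edges are bad and the optimal combined matching can be taken perfect, no agent ever improves across gadgets, so these contributions simply add: the whole instance admits a $(5\hat{n}+2q+1)$-stable matching, where $q$ is the number of flip-flops, if and only if the \textsc{x3c} instance has an exact cover.

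Choosing $q$ and the number $p$ of inert pairs then lets me place the threshold $\frac{5\hat{n}+2q+1}{36\hat{n}+4q+2p}$ at any prescribed $c$, which is the routine final step. I expect two points to be the real work. The first is the bookkeeping imposed by \emph{completeness}: since every cross-gadget edge exists (as a bad edge), I must guarantee it can never yield an improvement, which forces the optimal matching to be perfect and every gadget to be closed under its good edges, and I must verify that the flip-flop contributes exactly $2$ against \emph{all} matchings, not merely its two internal ones. The second, and genuinely delicate, point is the extreme endpoint $c=\frac{1}{2}$ itself: since the core $J$ sits strictly below ratio $\frac{1}{2}$ and flip-flops sit exactly at $\frac{1}{2}$, padding approaches but cannot reach $\frac{1}{2}$, so this case needs the \textsc{x3c} encoding to be made tight at ratio $\frac{1}{2}$ on its own --- for instance by building the encoding itself out of flip-flop gadgets in which a single unit of improvement is toggled by the existence of an exact cover against a background of $\frac{n}{2}-1$ unconditionally improvable agents.
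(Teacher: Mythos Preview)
Your approach is essentially the paper's approach. The paper also builds the dichotomous complete marriage core by taking a house-allocation hardness instance and turning the object side into agents whose preferences make them unable to improve (the paper routes through Theorem~\ref{thm:alltiesrneg} rather than Theorem~\ref{th:khat}, so its former objects have \emph{some} first-choice edges, but the effect is the same: the improvement count on the marriage side equals that of the underlying house-allocation instance). For the upward scaling the paper uses four-vertex paths in which the two middle agents rank only each other first and the two end agents rank only their path-neighbour first; this gadget, like your flip-flop, contributes exactly two improvable agents in every matching, and for the downward scaling the paper uses the same mutually-first pairs you do.

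The one noteworthy difference is that you explicitly flag the endpoint $c=\tfrac12$ as delicate, and you are right to: with a core whose critical ratio sits strictly below $\tfrac12$ and a padding gadget sitting exactly at $\tfrac12$, no finite amount of padding can make the overall threshold equal to $\tfrac{n}{2}$ on the nose. The paper's proof faces the identical arithmetic issue and does not spell out a fix; your proposal to rebuild the encoding so that the yes/no gap straddles $\tfrac{n}{2}$ is the natural remedy, though it would need to be made concrete.
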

\begin{proof}
    We extend the instance in Theorem~\ref{thm:alltiesrneg}---which had $3\hat{n}$ agents on one side and $18\hat{n}$ on the other side---by completing the acceptability graph via adding the remaining edges such that they are ranked second for each end vertex. It is clear that deciding if there is a $\left(5\hat{n}+1\right)$-stable matching remains NP-complete.

    To show hardness for smaller $c$ values, we pad the instance by adding pairs of agents who rank each other first, and all other agents second. For $\frac{5}{21}<c\le \frac{1}{2}$, we first add paths with 4 vertices, such that the middle two vertices only rank each other first, while the end vertices rank their only neighbor first. In this small instance, at least half of the agents can always improve and this fact remains true even after making the acceptability graph complete by adding the remaining edges as second best for all agents. 
\end{proof}

\section{Conclusion and open questions}
\label{se:con}

We have settled the main complexity questions on the verification and existence of $k$-stable and majority stable matchings in all three major matching models. We derived that the existence of a $k$-stable solution is the easiest to guarantee in the marriage model, while it cannot be guaranteed for any non-trivial $k$ at all in the house allocation model. Only one case remains partially open: in the roommates model, the existence of a $cn$-stable solution is guaranteed for $c \ge \frac{5}{6}$ (Theorem~\ref{th:ksrpol}), whereas NP-completeness was only proved for $c<\frac{2}{3}$ (Theorem~\ref{th:npsmsr}, point~3). We conjecture polynomial solvability for $\frac{2}{3}\le  c < \frac{5}{6}$.

A straightforward direction of further research would be to study the strategic behavior of the agents. It is easy to prove that $k$-stability, as stability and popularity, is fundamentally incompatible with strategyproofness. However, mechanisms that guarantee strategyproofness for a subset of agents might be developed. Another rather game-theoretic direction would be to investigate the price of $k$-stability.

A much more applied line of research involves computing the smallest $k$ for which implemented solutions of real-life matching problems are $k$-stable. For example: given a college admission pool and its stable outcome, how many of the students could have gotten into a better college in another matching? We conjecture that the implemented solution can only be improved for a little fraction of the applicants simultaneously. Simulations supporting this could potentially strengthen the trust in the system.

\subsubsection*{Acknowledgments.} We thank Barton E.\ Lee for drawing our attention to the concept of majority stability and for producing thought-provoking example instances. We also thank the reviewers of the paper for their suggestions that helped improving the presentation of the paper. 
Haris Aziz acknowledges the support from  NSF-CSIRO grant on ‘Fair Sequential Collective Decision-Making’. 
Gergely Csáji acknowledges the financial support by the Hungarian Academy of Sciences, Momentum Grant No. LP2021-1/2021, and by the Hungarian Scientific Research Fund, OTKA, Grant No.\ K143858. \'{A}gnes Cseh's work was supported by OTKA grant K128611 and the J\'anos Bolyai Research Fellowship.

\bibliographystyle{abbrvnat}
\bibliography{mybib}
\end{document}